\documentclass{amsart}

\usepackage{3-sunlet}

\begin{document}

\begin{abstract}
    Phylogenetic networks describe the evolution of a set of taxa for which reticulate events have occurred at some point in their evolutionary history. Of particular interest is when the evolutionary history between a set of just three taxa has a reticulate event. In molecular phylogenetics, substitution models can model the process of evolution at the genetic level, and the case of three taxa with a reticulate event can be modelled using a substitution model on a mixed graph called a 3-sunlet. We investigate a class of substitution models called group-based phylogenetic models on 3-sunlet networks. In particular, we investigate the discrete geometry of the parameter space and how this relates to the dimension of the phylogenetic variety associated to the model. This enables us to give a dimension formula for this variety for general group-based models when the order of the group is odd. 
\end{abstract}

\maketitle

\section{Introduction}

Phylogenetic networks are directed graphs that describe the evolution of a set of taxa for which reticulate events have occurred. Such events, which include, horizontal gene transfer and hybridization, are increasingly being discovered to have occurred between taxa, and the development of methods to reconstruct phylogenetic networks from molecular sequence data is an active area of research. It is therefore important that phylogenetic networks and the models that are placed on them are well understood.

In this work, we focus on phylogenetic network-based substitution models. These are latent-variable Markov models where the state space is a set of biological molecules (usually the four nucleic acids $\{ \rm{A,G,C,T}\}$), and along each edge in the network, a transition matrix gives the probabilities of each possible substitution occurring along that edge (see \cite{gross2018distinguishing}, \cite{Nakhleh2011} for further details). In this work we focus on a family of Markov models called group-based models, so called because the state space of the Markov process is identified with a finite abelian group. In molecular phylogenetics, there are several nucleotide substitution models that are group-based models, such as the Jukes-Cantor (JC) model, the Kimura 2-parameter (K2P) model, and the Kimura 3-parameter (K3P) model. In all these cases, the state space of the four nucleic acids $\{ \rm{A,G,C,T}\}$ is identified with the Klein-four group $\mathbb{Z}/2\mathbb{Z}\times\mathbb{Z}/2\mathbb{Z}$.

For Markov models on phylogenetic networks, the joint probabilities of observing particular patterns at the leaves of the network have  polynomial parameterizations in terms of the numerical parameters of the model, i.e., the substitution rates along each edge and reticulation edge parameters. This makes them amenable to algebraic study, and, in particular, the space of all possible joint probabilities at the leaves is the intersection of an algebraic variety with the probabilty simplex.

In this work we are concerned with the dimension of the variety associated to a phylogenetic network and group-based model.  The dimension of such varieties, in this case, $t$-varieties, is an interesting geometric question in its own right, but also has applications to identifiability and phylogenetic network inference. While \cite{gross2023dimensions} establishes the dimension for most group-based phylogenetic network models, the most elusive has been when the network contains 3-cycles.  In this work, we focus on the smallest phylogenetic network containing a 3-cycle, which is called a \emph{3-sunlet}. While these networks remain the most elusive to understand mathematically, they are perhaps the most important cycles to understand biologically.  Indeed, it is assumed that 3-cycles are the most common cycle motif in true phylogenetic networks, since they indicate hybridization or lateral gene transfer between two very closely related taxa, whereas, larger cycles would indicate such reticulation events between less closely related taxa, which, in many cases, is assumed to be rare. Understanding the dimensions of 3-cycles, can help us establish the statistical property of identifiability, as demonstrated in \Cref{prop:identifiability}.  It can also help us understand how networks with 3-cycles are geometrically embedded within networks with larger cycles, helping interpret residuals when using algebraic methods as in \cite{barton2022statistical}, \cite{martin2023algebraic} or determining the most appropriate penalty term when using Bayesian methods. 

For group-based models on phylogenetic trees, after a transformation,  the parameterization of the model is monomial \cite{evans1993invariants}, \cite{szekely1993fourier}, and thus the corresponding variety is a toric variety. These models have been well studied (see e.g. \cite{sturmfels2005toric}). The parameterization of the model on a sunlet network has a  combinatorial interpretation, and, after the same transformation, is described by binomials. Here we look at the 3-cycle case in depth, and establish a dimension result for group-based models for groups of odd order.

\begin{theorem}\label{thm:main}
Let $G$ be a finite abelian group of odd order $\ell+1 \geq 5$, and let $\mathcal{N}$ be the 3-sunlet network under the general group-based model given by $G$, with corresponding phylogenetic network variety $V_\mathcal{N}^G$. Then the affine dimension of $V_\mathcal{N}^G$ is given by 
$$\dim V_\mathcal{N}^G = 5\ell + 1.$$
\end{theorem}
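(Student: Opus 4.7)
My plan is to work in Fourier coordinates, where the general group-based model on a phylogenetic network admits a tractable parameterization. After the standard Hadamard transform, each Fourier coordinate $q_\chi$ of $V_\mathcal{N}^G$ is indexed by a triple $\chi = (\chi_1,\chi_2,\chi_3) \in \widehat{G}^3$ satisfying $\chi_1+\chi_2+\chi_3 = 0$, and it factors as a monomial in the pendant (leaf-edge) parameters times a convex combination $\lambda\,M_1(\chi) + (1-\lambda)\,M_2(\chi)$ of two monomials in the cycle-edge parameters. The two monomials arise from the two binary trees displayed by the 3-sunlet, and $\lambda$ is the reticulation mixing parameter. I would first make this binomial parameterization fully explicit.

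I would then compute $\dim V_\mathcal{N}^G$ as the generic rank of the Jacobian of the parameterization map $\phi\colon \Theta \to \mathbb{C}^N$, where $N = |G|^2$ is the number of nonzero Fourier coordinates. The rank splits naturally across three groups of parameters: those on pendant edges, which appear monomially and contribute $3\ell$ independent directions; those on cycle edges, which enter through the binomials and are the subtle part; and the single reticulation parameter $\lambda$, which contributes the final $+1$. Under this decomposition, matching the theorem reduces to showing that the cycle parameters contribute exactly $2\ell$ independent directions (rather than the naive $3\ell$). I would establish this both from below, by exhibiting $2\ell$ algebraically independent cycle-coordinate functions, and from above, by bounding the transcendence degree of the corresponding subring.

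The technical heart of the argument, and the only place where the odd-order hypothesis genuinely enters, is the combinatorial analysis of the cycle-edge binomials. The 3-cycle induces a natural action of $\widehat{G}$ on the Fourier indices, and the Jacobian rank calculation organizes along the resulting orbits. On each orbit one computes the rank of a small integer matrix encoding the exponents of $M_1$ and $M_2$, and this rank can drop precisely when some coordinate $\chi_i$ of a Fourier index is $2$-torsion, i.e.\ satisfies $\chi_i = -\chi_i$. When $|G|$ is odd there is no nontrivial $2$-torsion, every orbit has its generic size, and each orbit-level matrix attains its expected rank, so the orbit contributions sum to the required $2\ell$. Pushing through this orbit-by-orbit rank calculation, and along the way verifying that the lower bound $|G|\geq 5$ supplies enough full-sized orbits to realize the count, is the main obstacle and is precisely the discrete-geometric analysis of the parameter space alluded to in the introduction.
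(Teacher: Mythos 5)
Your proposal correctly sets up the Fourier-transformed parameterization and correctly locates the difficulty in the cycle-edge contribution, but it stops exactly where the proof has to begin: the claim that the cycle parameters contribute $2\ell$ independent directions \emph{is} the theorem, and you defer it to an ``orbit-by-orbit rank calculation'' that is never specified. The paper does not compute the generic rank of the honest Jacobian at all; it degenerates the map tropically (following Draisma), so that each coordinate $m_1(\mathbf{g})+m_2(\mathbf{g})$ of the parameterization is replaced by whichever monomial a weight vector $\lambda$ selects, and the problem becomes: find a single chamber of an explicit hyperplane arrangement in which the resulting $0/1$ exponent matrix $A_\lambda$ has rank $5\ell+1$. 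The entire content of the proof is the construction of that chamber --- partitioning $G\setminus\{0\}$ into $X\sqcup(-X)$ (this is where odd order enters), assigning $(g,h,-g-h)$ to $\mathcal{T}_1$ for $g\in X$, $h\neq -g$ and essentially everything else to $\mathcal{T}_2$ --- followed by explicit linear algebra splitting $A_\lambda$ into a block of rank $3\ell+1$ (a $3$-star tree) and a block of rank at least $2\ell$ (a delicate case analysis requiring an auxiliary injection $h\colon X\to G\setminus X$ and hence $|G|\geq 7$, with $|G|=5$ handled by computer). None of this is present in, or implied by, your outline.

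Two specific claims in your sketch are also suspect. First, the bookkeeping ``pendant edges give $3\ell$, cycle edges give $2\ell$, reticulation gives $+1$'' is not the decomposition that actually occurs: the $\ell$-dimensional redundancy lives between the pendant edge at the reticulation leaf and the two reticulation edges (in the paper's notation, $r_1^g=r_5^g+r_6^g$ for every $g$), not internally among the cycle edges, so exhibiting ``$2\ell$ independent cycle-coordinate functions'' as stated would not combine with an independent $3\ell$ pendant contribution to give the bound. Second, your assertion that the rank drops ``precisely when some $\chi_i$ is $2$-torsion'' cannot be right as a statement about the variety: the paper's computations show that even-order groups of order at least $5$ also attain dimension $5\ell+1$; $2$-torsion only obstructs one particular choice of rank certificate. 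The odd-order hypothesis is a limitation of the proof technique, not a locus where the dimension genuinely degenerates, so an argument built on the premise that $2$-torsion forces a rank drop would be aiming at a false statement.
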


We call the quantity $5\ell + 1$ the \emph{expected dimension} of the model, and \Cref{thm:main} agrees with the conjecture given in \cite{gross2023dimensions}. As we will see in Section \ref{sec:experiments}, we believe that this conjecture holds for all finite abelian groups (of order at least 5), but as we discuss in detail in Section \ref{sec:discussion}, the proof strategy that we use here is not easily modified for groups of even order, and thus those cases remain open.

The paper is organized as follows.  In Section \ref{sec:background}, we describe phylogenetic networks and the  paramaterization map for 3-sunlet networks for the general group-based model. We also outline the tropicalization method from \cite{DRAISMA2008349} that we use to determine a lower bound for the dimension. The method is rooted in tropical geometry and leads to hyperplane arrangements on spaces of weight vectors. We close Section \ref{sec:background} with observations about the chambers of these hyperplane arrangements for 3-sunlet networks.  In Section \ref{sec:solution}, we prove the main theorem of the paper (Theorem \ref{thm:main}), which gives a formula for the dimension for 3-sunlet networks under general group-based models of odd order greater than or equal to 5.  We end the section with a partial identifiability result (Proposition \ref{prop:identifiability}) for general group-based models of odd order. In Section \ref{sec:experiments}, we investigate the dimension for small
finite abelian groups (both even and odd) through computational experiments. In particular, we explore chambers of hyperplane arrangements to highlight the difficulty in finding appropriate weight vectors that can be used to establish dimensions of 3-sunlets.  Section \ref{sec:discussion} closes the paper with a discussion about the challenges involved in understanding 3-sunlets, and more generally, networks with 3-cycles.

\section{Background}\label{sec:background}

A \emph{(rooted binary) phylogenetic network} is a rooted, acyclic, directed graph where each non-root internal vertex has in-degree one and out-degree two, or in-degree two and out-degree one. We refer to the internal vertices with in-degree one as \emph{tree vertices} and the internal vertices with in-degree two as \emph{reticulation vertices}. The leaves of the phylogenetic network (the vertices of in-degree 1 and out-degree 0) are labelled by a set of taxa, for which we will always use the set of the first $n$ positive integers $[n]=\{1, \ldots, n\}$. The two edges directed into a reticulation vertex are called \emph{reticulation edges}. A phylogenetic network $\mathcal{N}$ is said to be level-1 if, in the undirected skeleton of $\mathcal{N}$, no two cycles share an edge. For an example of a level-1 phylogenetic network, see \Cref{fig:level-1 network}. A \emph{semi-directed phylogenetic network} is a mixed graph that is obtained from a phylogenetic network by suppressing the root vertex and un-directing all non-reticulation edges. Semi-directed networks generalize the notion of unrooted trees, and, for group-based models, if two phylogenetic networks have the same underlying semi-directed topology, then their corresponding varieties are also equal \cite[Lemma 2.2]{gross2023dimensions}. Since we are concerned with the dimensions of the corresponding varieties, we will only consider semi-directed phylogenetic networks.

The fundamental building blocks of level-1 semi-directed phylogenetic networks are unrooted trees and $k$-\emph{sunlet} networks, which are the minimal semi-directed phylogenetic networks containing a $k$-cycle. In this paper, we focus on 3-sunlet networks, which are the minimal semi-directed phylogenetic networks containing a triangle (i.e., a 3-cycle). A 3-sunlet can be obtained from the phylogenetic network in \Cref{fig:level-1 network} by restricting the network to the leaves labelled by taxa 1, 3, and 6 and suppressing vertices of degree 2 (restriction is discussed in more detail towards the end of Section \ref{sec:solution}). Phylogenetic networks with triangles are thought to be among the most common phylogenetic networks, because hybridization usually occurs between closely related species. Despite this, 3-sunlet networks are the least understood of the sunlet networks.

\begin{figure}[h!]
    \centering
    \begin{tikzpicture}
    [every node/.style={inner sep=0pt},
                    every path/.style={thick},   
decoration={markings, 
    mark= at position 0.5 with {\arrow{stealth}}
    }
]
        \draw[postaction={decorate}] (0,1)--(2,-1);
        \draw[postaction={decorate}] (2,-1)--(1,-2);
        \draw[postaction={decorate}] (2,-1)--(3,-2);
        \draw[dashed, postaction={decorate}] (3,-2)--(2,-3);
        \draw[postaction={decorate}] (3,-2)--(4,-3);
        \draw (4,-3)--(4,-4);
        \draw[dashed, postaction={decorate}] (1,-2)--(2,-3);
        \draw[postaction={decorate}] (2,-3)--(2,-4);
        \draw[postaction={decorate}] (1,-2)--(0,-3);
        \draw (0,-3)--(0,-4);
        \draw[postaction={decorate}] (0,1)--(-1,0);
        \draw[dashed, postaction={decorate}] (-1,0)--(-1,-3);
        \draw[postaction={decorate}] (-1,-3)--(-1,-4);
        \draw[postaction={decorate}] (-1,0)--(-2.5,-1.5);
        \draw[dashed, postaction={decorate}] (-2.5,-1.5)--(-1,-3);
        \draw[postaction={decorate}] (-2.5,-1.5)--(-3,-2);
        \draw[postaction={decorate}] (-3,-2)--(-2,-3);
        \draw(-2,-3)--(-2,-4);
        \draw[postaction={decorate}] (-3,-2)--(-4,-3);
        \draw(-4,-3)--(-4,-4);

        \node (v3) at (-4,-4.5) {$1$};
        \node (v3) at (-2,-4.5) {$2$};
        \node (v3) at (-1,-4.5) {$3$};
        \node (v3) at (0,-4.5) {$4$};
        \node (v3) at (2,-4.5) {$5$};
        \node (v3) at (4,-4.5) {$6$};
        
    \end{tikzpicture}
    \caption{A rooted, level-1 phylogenetic network. This network contains a single 3-cycle and a single 4-cycle. Reticulation edges are drawn with dashed lines.}
    \label{fig:level-1 network}
\end{figure}
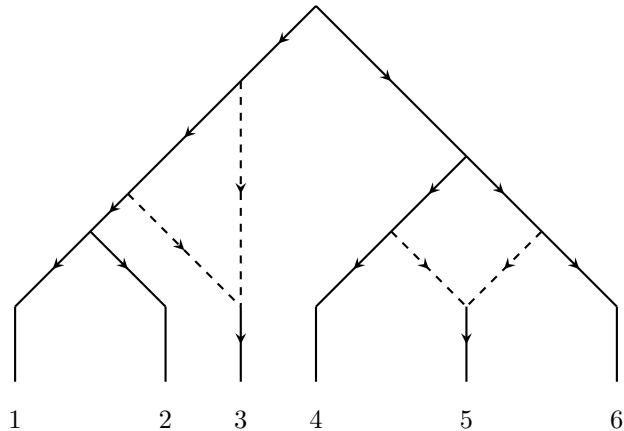

We place a group-based model of evolution on a level-1 semi-directed phylogenetic network $\mathcal{N}$ by arbitrarily assigning direction to all undirected edges (i.e., non-reticulation edges), and choosing a finite abelian group $G$ and a subgroup $B$ of the automorphism group of $G$, denoted $\Aut(G)$.  We note that arbitrarily assigning directions to all undirected edges results in the same variety as rooting the semi-directed network even if the chosen edge directions are not consistent with any placement of a root vertex (see the proof of Lemma 2.2 in \cite{gross2023dimensions}). The group $G$ is identified with the state space of the model, and the group $B$ encodes additional constraints that the transition matrices must adhere to. When we choose $B=\{\text{id}\}$ we call the model the \emph{general group-based model for $G$}. For example, the Kimura 3-parameter (K3P) model is the general group-based model for $G = \mathbb{Z}/2\mathbb{Z}\times\mathbb{Z}/2\mathbb{Z}$. The Jukes-Cantor (JC) model is the group-based model with $G = \mathbb{Z}/2\mathbb{Z}\times\mathbb{Z}/2\mathbb{Z}$ and $B = \Aut{G}$, which we identify with $S_3$, the symmetric group of order 3. In between these two we have the Kimura 2-parameter model (K2P), where $B$ is a subgroup isomorphic to $S_2$. Group-based models have the desirable property that for any phylogenetic tree there exists a Fourier transformation that transforms expressions for the marginal probabilities of observations at the leaves into monomial expressions (see e.g., \cite[Chapter 15]{gsm194sullivant} for an overview). For level-1 phylogenetic networks, that same transformation significantly simplifies the expressions for the marginal probabilities, although they are not monomial \cite[Prop 4.2]{gross2018distinguishing}.

We are interested in identifying the semi-directed phylogenetic network from observed data on the leaves. As noted above, for group-based models, the root of network is not identifiable. For certain group-based models, identifiability results are known (see e.g. \cite{gross2018distinguishing}, \cite{gross2021distinguishing}, \cite{hollering2021identifiability}, \cite{cummings2023pfaffian}), but a general result for all group-based models has yet to be determined. Understanding the dimension of the variety associated to a phylogenetic network and model can assist in determining identifiability. A step in this direction was taken in \cite{gross2023dimensions}, and some identifiability results were obtained for arbitrary group-based models. Here, one limiting factor was being unable to determine the dimension of the varieties associated the the 3-sunlet network.

\subsection{The 3-sunlet networks}
The 3-sunlet is the semi-directed network topology of a simple 3-leaf phylogenetic network with a single cycle. It poses a particular problem to phylogeneticists, because under the most commonly used 4-state group-based models (JC, K2P, and K3P), the reticulation vertex is not identifiable from data at the leaves of the network (see e.g. \cite[Lemma~1]{gross2021distinguishing}). Thus many of the identifiability results obtained for these models require the phylogenetic networks to be \lq triangle free\rq ~\cite{gross2021distinguishing}.

Mathematically, the 3-sunlet is a mixed graph consisting of a single (undirected) 3-cycle, and one leaf vertex adjacent to each vertex in the cycle. One of the vertices in the cycle is the reticulation vertex, and the two cycle edges adjacent to this vertex are reticulation edges. The reticulation edges are the only directed edges and they are directed towards the reticulation vertex (see Figure \ref{fig:3sunlet} for an example). By removing either of the edges $e_6$ or $e_5$ in \Cref{fig:3sunlet} and undirecting the remaining edge, we obtain an unrooted phylogenetic tree (with a vertex of degree 2), which we denote by $\C{T}_1$ and $\C{T_2}$ respectively.

 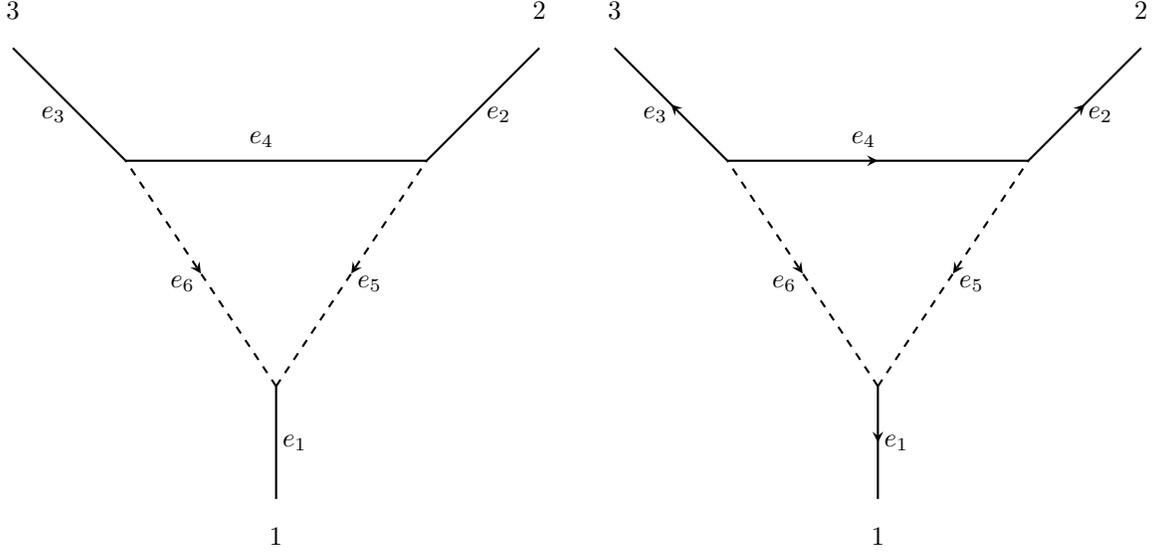
\begin{figure}[ht]
 \centering
\begin{tikzpicture}
[every node/.style={inner sep=0pt},
                    every path/.style={thick},   
decoration={markings, 
    mark= at position 0.5 with {\arrowreversed{stealth}}
    }
]

\draw (0.5,3.5) -- (2,2) node[midway,below=4, left=1] {$e_{3}$};
\draw (6,2)  -- (2 ,2)  node[midway,above=8,left=0.5] {$e_{4}$};
\draw (7.5,3.5) -- (6,2) node[midway,below=4, right=1] {$e_{2}$};
\draw[dashed,postaction={decorate}] (4,-1)  -- (2,2) node[midway,below=4,left=2] {$e_6$};
\draw[dashed,postaction={decorate}] (4,-1)  -- (6,2) node[midway,below=4,right=2] {$e_5$};
\draw (4,-2.5) -- (4,-1) node[midway,right=2] {$e_1$};

\node (v3) at (4,-3) {$1$};
\node (v3) at (7.5,4) {$2$};
\node (v3) at (0.5,4) {$3$};

\draw[postaction={decorate}] (8.5,3.5) -- (10,2) node[midway,below=4, left=1] {$e_{3}$};
\draw[postaction={decorate}]  (14,2)  -- (10 ,2)  node[midway,above=8,left=0.5] {$e_{4}$};
\draw[postaction={decorate}]  (15.5,3.5) -- (14,2) node[midway,below=4, right=1] {$e_{2}$};
\draw[dashed,postaction={decorate}] (12,-1)  -- (10,2) node[midway,below=4,left=2] {$e_6$};
\draw[dashed,postaction={decorate}] (12,-1)  -- (14,2) node[midway,below=4,right=2] {$e_5$};
\draw[postaction={decorate}]  (12,-2.5) -- (12,-1) node[midway,right=2] {$e_1$};

\node (v3) at (12,-3) {$1$};
\node (v3) at (15.5,4) {$2$};
\node (v3) at (8.5,4) {$3$};

\end{tikzpicture}
\caption{(Left) The semi-directed network topology of the 3-sunlet network with taxa labels 1,2, and 3. (Right) A directed 3-sunlet network. }
\label{fig:3sunlet}
\end{figure}

In order to simplify our exposition, we begin with the phylogenetic network parameterization in the transformed coordinates. Readers interested in the derivation of this parameterization from the substitution model can consult \cite[Section~2]{gross2023dimensions} for a full explanation.  In order to specify the parameterization we must direct the undirected edges of the semi-directed network topology. By \cite[Lemma~2.2]{gross2023dimensions} we may arbitraily choose these directions, and so for the remainder of this work we denote by $\C N$ the directed 3-sunlet network in \Cref{fig:3sunlet} (right).

Let $G$ be a finite abelian group and let $B$ be a subgroup of $\Aut{G}$. Let $B\cdot G$ be the set of $B$-orbits of $G$, and define $\ell + 1 := |B\cdot G|$. Note that when $B=\{\rm{id}\}$ we have $\ell + 1 = |G|$. A \emph{consistent leaf-labelling} is a triple $\mathbf{g} = (g_1, g_2, g_3)\in G^3$ satisfying $g_1 + g_2 + g_3 = 0$. For a fixed $G$ there are exactly $|G|^2$ consistent leaf-labellings. We give $\mathbb{C}^{6(\ell + 1)}$ a basis indexed by $B$-orbits and edges of $\mathcal{N}$, and denote the basis element corresponding to the $B$-orbit $[g]$ and edge $e_i$ as $E_i^g$. We give $\mathbb{C}^{(\ell + 1)^2}$ a basis indexed by consistent leaf-labellings $\mathbf{g} = (g_1, g_2, g_3)$. Then the \emph{parameterization map} $\phi_{\mathcal{N}}^{(G,B)}$ (in transformed coordinates) is given by
\begin{align*}
&\phi_{\mathcal{N}}^{(G,B)}:\, \mathbb{C}^{6(\ell + 1)} \to \mathbb{C}^{(\ell + 1)^2}, \\
                             \big(\phi_{\mathcal{N}}^{(G,B)}(w)\big)_{\mathbf{g}} &= w_1^{g_1}w_2^{g_2}w_3^{g_3}w_4^{g_1 + g_2}w_5^{g_1} + w_1^{g_1}w_2^{g_2}w_3^{g_3}w_4^{g_2}w_6^{g_1} \\
                             &=m_1(\mathbf{g}) + m_2(\mathbf{g}).  
\end{align*}
where $w_i^g$ is the coefficient of $E_i^g$ in $w$. Here, the first term $m_1(\mathbf{g}) := w_1^{g_1}w_2^{g_2}w_3^{g_3}w_4^{g_1 + g_2}w_5^{g_1}$ comes from the phylogenetic tree $\C{T}_1$ (obtained from $\C{N}$ be removing the edge $e_6$). Each superscript $g_i$ is given by the edge-labelling of the edge $e_i$ in the right hand diagram in \Cref{fig:3sunlet} (see \cite{gross2023dimensions} for further details). Similarly, the second term $m_2(\mathbf{g}) := w_1^{g_1}w_2^{g_2}w_3^{g_3}w_4^{g_2}w_6^{g_1}$ comes from the phylogenetic tree $\C{T}_2$ (obtained from $\C{N}$ by removing the edge $e_5$). The \emph{phylogenetic variety of $\mathcal{N}$ and $(G,B)$} is defined as the Zariski closure of the image of $\phi_{\mathcal{N}}^{(G,B)}$, denoted
$$V_{\mathcal{N}}^{(G,B)} = \overline{\rm{im}\,\phi_{\mathcal{N}}^{(G,B)}},$$
and this is the object that we study. Since the map $\phi_{\mathcal{N}}^{(G,B)}$ is homogeneous, the variety $V_{\mathcal{N}}^{(G,B)}$ is a projective variety. However, we will mostly remain in affine space and consider the affine cone.

Observe that the map $\phi_{\mathcal{N}}^{(G,B)}$ is a morphism of affine varieties. It has comorphism given by
\begin{align*}
\psi_{\mathcal{N}}^{(G,B)}:\, \mathbb{C}[q_{\mathbf{g}}\ &|\ g_1 + g_2 + g_3 = 0] \to \mathbb{C}[a_i^g\ |\ i = 1,\ldots,6,\text{ and } g\in B\cdot G], \\
    q_{\mathbf{g}} &\longmapsto  a_1^{g_1}a_2^{g_2}a_3^{g_3}a_4^{g_1 + g_2}a_5^{g_1} + a_1^{g_1}a_2^{g_2}a_3^{g_3}a_4^{g_2}a_6^{g_1}, 
\end{align*}
where we think of the coordinate ring of $\mathbb{C}^{(\ell + 1)^2}$ as being generated by variables $q_{\mathbf{g}}$ for consistent leaf-labellings $\mathbf{g} = (g_1,g_2,g_3)$, and we think of the coordinate ring of $\mathbb{C}^{6(\ell + 1)}$ as being generated by variables $a_i^g$ for $i=1,\ldots,6$ and $g \in B\cdot G$. Under this definition, the zero ideal of $V_{\mathcal{N}}^{(G,B)}$, which we denote $I_{\mathcal{N}}^{(G,B)}$, is given by $\ker \psi_{\mathcal{N}}^{(G,B)}$. This ideal, and in particular, its generating sets, are an important object of study in mathematical phylogenetics as they can be used for model selection \cite{barton2022statistical} \cite{cummings2021invariants} \cite{martin2023algebraic} and to establish identifiability \cite{gross2018distinguishing} \cite{hollering2021identifiability} \cite{gross2021distinguishing} \cite{cummings2023pfaffian}. While some polynomials are known for some group-based models such as CFN \cite{cummings2021invariants} \cite{cummings2023pfaffian} and several 4-state models \cite{frohn2024invariants}, generating sets of $I_{\mathcal{N}}^{(G,B)}$ are not known in general.

\subsection{Determining Dimension}
In Section \ref{sec:solution}, we give a dimension result for the 3-sunlet network and the general group-based model for groups of odd order by following the approach taken in \cite{gross2023dimensions}. Here, we introduce the concepts and objects we need. First, for a level-1 phylogenetic network $\mathcal{N}$ and group-based model $(G,B)$, the variety $V_{\C N}^{(G,B)}$ is defined as the Zariski closure of the image of a polynomial map (as defined above for the 3-sunlet). Considering the number of free parameters in the domain of the map gives us an upper bound on the dimension.
\begin{lem}\cite[Proposition~4.2]{gross2023dimensions}\label{lem:upper}
    Let $\mathcal{N}$ be the 3-sunlet network, $G$ a finite abelian group and $B$ a subgroup of $\rm{Aut}(G)$ with $|G\cdot B| = \ell + 1$. Then
    $$\dim V_{\C N}^{(G,B)} \leq 5\ell + 1.$$
    \qed
\end{lem}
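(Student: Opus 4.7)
The strategy is to lower-bound the generic fibre dimension of $\phi_{\mathcal{N}}^{(G,B)}$ by producing an $(\ell+5)$-dimensional family of source symmetries that preserve every fibre; the fibre dimension theorem then gives $\dim V_{\mathcal{N}}^{(G,B)} \leq 6(\ell+1) - (\ell+5) = 5\ell + 1$.

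I would first look for diagonal multiplicative rescalings $w_i^g \mapsto \lambda_i(g)\, w_i^g$, where $\lambda_i : B\cdot G \to \mathbb{C}^*$. Setting $L_i := \log \lambda_i$, such a rescaling preserves both monomials $m_1(\mathbf{g})$ and $m_2(\mathbf{g})$ for every consistent leaf-labelling $\mathbf{g} = (g_1,g_2,g_3)$ precisely when
\begin{align*}
L_1(g_1) + L_2(g_2) + L_3(g_3) + L_4(g_1+g_2) + L_5(g_1) &= 0, \\
L_1(g_1) + L_2(g_2) + L_3(g_3) + L_4(g_2) + L_6(g_1) &= 0.
\end{align*}
Subtracting the two equations and then setting $g_2 = 0$ forces $L_6 = L_5 + L_4 - L_4(0)$ and reduces the rest to the condition that $L_4 - L_4(0)$ is a group homomorphism $G \to (\mathbb{C}, +)$; since $G$ is finite this homomorphism must vanish, so $L_4$ is constant and $L_6 = L_5$. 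Substituting back and running the same additivity argument on the resulting equation $F(g_1) + L_2(g_2) + L_3(g_3) + L_4(0) = 0$ in $F := L_1 + L_5$ shows that each of $F$, $L_2$, $L_3$ is also constant, tied by a single linear relation among the four constants. This yields $(\ell+1) + 3 = \ell + 4$ dimensions of multiplicative gauge: $\ell + 1$ from the free function $L_5 : B\cdot G \to \mathbb{C}$, and three more from the four constants modulo the single relation.

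The remaining dimension is additive. Because $0 \in G$ is always a $B$-orbit by itself, the coordinates $w_5^0$ and $w_6^0$ are well-defined, and they enter $\phi_{\mathcal{N}}^{(G,B)}$ only through the combination $w_5^0 + w_6^0$: each of them appears in $m_1(\mathbf{g})$ or $m_2(\mathbf{g})$ only when $g_1 = 0$, where the two monomials share the common prefactor $w_1^0 w_2^{g_2} w_3^{g_3} w_4^{g_2}$. Consequently the additive translation $w_5^0 \mapsto w_5^0 + t$, $w_6^0 \mapsto w_6^0 - t$ fixes every coordinate of $\phi_{\mathcal{N}}^{(G,B)}$, and at a generic point its tangent direction is independent of the multiplicative torus directions. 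Adding this affine direction brings the fibre dimension up to $\ell + 5$ and closes the bound. I expect the main obstacle to be recognising this extra affine symmetry: the multiplicative torus analysis alone falls one dimension short, and the pair $(w_5^0, w_6^0)$ is easy to overlook because the redundancy is additive rather than multiplicative.
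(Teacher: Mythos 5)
Your proposal is correct, but it is worth noting that the paper does not actually prove this lemma: it imports the bound wholesale from \cite[Proposition~4.2]{gross2023dimensions} (hence the bare \textit{qed}), whereas you give a self-contained fibre-dimension argument specialised to the 3-sunlet. Your computation of the multiplicative stabiliser checks out: subtracting the two log-equations and setting $g_2=0$ does force $L_6 = L_5 + L_4 - L_4(0)$, the residual condition makes $L_4 - L_4(0)$ a homomorphism from the finite group $G$ into the torsion-free group $(\mathbb{C},+)$ and hence zero, and the same additivity trick applied to $F = L_1 + L_5$ kills the remaining freedom down to one free function $L_5$ plus three constants, giving $\ell+4$ torus directions that act with trivial generic stabiliser since all coordinates of a generic point are nonzero. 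The extra additive direction is also genuine: $[0]$ is always a singleton $B$-orbit, $w_5^0$ and $w_6^0$ occur only in the coordinates indexed by $(0,g_2,-g_2)$, where both monomials carry the common prefactor $w_1^0 w_2^{g_2} w_3^{g_3} w_4^{g_2}$, so only $w_5^0+w_6^0$ is visible to $\phi_{\mathcal{N}}^{(G,B)}$; and since the torus constraints force $\delta_5(0)=\delta_6(0)$ for any infinitesimal scaling, the vector $e_5^0 - e_6^0$ lies outside the torus tangent space at a generic point. This yields a generic fibre of dimension at least $\ell+5$ and hence $\dim V_{\mathcal{N}}^{(G,B)} \le 6(\ell+1)-(\ell+5) = 5\ell+1$, matching the claim. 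What your route buys is transparency: it exhibits explicitly the $(\ell+5)$-dimensional redundancy whose shadow appears later in the paper as the row relation $r_1^g = r_5^g + r_6^g$ and the rank-preserving swap of $r_5^0$ and $r_6^0$ in Proposition \ref{prop:hyperplane0rank}; what the citation buys is uniformity, since the result in \cite{gross2023dimensions} covers all $n$-sunlets at once.
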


As a result of Lemma \ref{lem:upper}, to prove Theorem \ref{thm:main}, it is sufficient for us to give a lower bound on the dimension. We do this by exhibiting a Jacobian matrix of sufficient rank of the tropicalization of the parameterization.

Let $\phi = \phi_{\mathcal{N}}^{(G,B)}: \mathbb{C}^{6(\ell + 1)} \to \mathbb{C}^{(\ell+1)^2}$ be the parameterization map of the 3-sunlet network under the general group-based model for an abelian group $G$, and for a consistent leaf-labelling $\mathbf{g} = (g_1, g_2, g_3)$, let $\phi_{\mathbf{g}}$ be the component of $\phi$ mapping onto the $\mathbf{g}$-coordinate of $\mathbb{C}^{(\ell+1)^2}$. Since $\phi$ is a polynomial map, each $\phi_{\mathbf{g}}$ is a polynomial, and we can define
\begin{align*}
\Trop(\phi_{\mathbf{g}}):\,&\mathbb{R}^{6(\ell + 1)} \to\mathbb{R} \\
                        &\lambda  \mapsto \min_{\alpha\in M} \langle\lambda,\alpha\rangle,
\end{align*}
where $M\subset \mathbb{Z}^{6(\ell + 1)}$ denotes the set of exponent vectors corresponding to the monomials in the polynomial expression for $\phi_{\mathbf{g}}$. Then $\Trop(\phi): \mathbb{R}^{6(\ell + 1)} \to\mathbb{R}^{(\ell + 1)^2}$ is the map with components $\Trop(\phi_{\mathbf{g}})$. Now for $\lambda\in\mathbb{R}^{6(\ell + 1)}$ at which $\Trop(\phi)$ is differentiable there exists a matrix $A_\lambda$ such that $\Trop(\phi)(\mu) = A^T_\lambda \mu$ for all $\mu$ in an open neighbourhood of $\lambda$ (in fact $A^T_\lambda$ is the Jacobian of $\Trop(\phi)$ at $\lambda$). Then we have a lower bound on the dimension of the affine variety $V_\mathcal{N}^G$ given by
$$ \dim V_{\mathcal{N}}^G \geq \max_{\lambda\in \mathbb{R}^{6(\ell + 1)}} \rank_\mathbb{R} A_\lambda.$$
This is a specific case of Corollary 2.3 in \cite{DRAISMA2008349}. For full details we recommend the reader consult \cite{DRAISMA2008349} and \cite[Section~2.3]{gross2023dimensions}.

In this paper, we study the matrices $A_\lambda$, and, in particular, the cone in the space $\mathbb{R}^{6(\ell + 1)}$ that they induce. We can think about this space as being a kind of \lq tropical dual\rq\ to the parameter space $\mathbb{C}^{6(\ell + 1)}$, and we adopt the same indexing. That is, the entries of $\lambda\in\mathbb{R}^{6(\ell + 1)}$ are indexed by $B$-orbits and edges of $\C{N}$. Then $\lambda(w_i^g) = \lambda_i^g $, where $i\in\{1,2,3,4,5,6\}$ and $g\in B\cdot G$. Observe that the vector $\lambda$ defines a monomial order on the polynomial ring $\mathbb{C}[a_i^g\ |\ i = 1,\ldots,6,\text{ and } g\in B\cdot G]$ (provided we specify another order for resolving ties). For this reason, we call $\lambda\in\mathbb{R}^{6(\ell + 1)}$ a \emph{weight vector}.

For weight vectors $\lambda$ where $\Trop(\phi)$ is differentiable, each column of $A_\lambda$ is indexed by a consistent leaf-labelling $\mathbf{g}=(g_1, g_2, g_3)$. The $\mathbf{g}$-entry of the parameterization $\phi_{\C{N}}$ is given by $m_1(\mathbf{g}) + m_2(\mathbf{g})$, where $m_i(\mathbf{g})$ is the $\mathbf{g}$-entry of the parameterization of $\C{T}_i$. For each $\mathbf{g}$ and $m_i = m_i(\mathbf{g})$ we define $\lambda(m_i)$ to be the natural product of the row vector of exponents of $m_i$, with the column vector $\lambda$. That is,

\begin{gather}\label{eqn:lambdam1}
    \begin{aligned}
    \lambda(m_1) &= \lambda(w_1^{g_1}w_2^{g_2}w_3^{g_3}w_4^{g_1+g_2}w_5^{g_1}) \\
    &= \lambda_1^{g_1} + \lambda_2^{g_2} + \lambda_3^{g_3} + \lambda_4^{g_1 + g_2} + \lambda_5^{g_1}
    \end{aligned}
\end{gather}
and
\begin{gather}\label{eqn:lambdam2}
\begin{aligned}
    \lambda(m_2) &= \lambda(w_1^{g_1}w_2^{g_2}w_3^{g_3}w_4^{g_2}w_6^{g_1}) \\
    &= \lambda_1^{g_1} + \lambda_2^{g_2} + \lambda_3^{g_3} + \lambda_4^{g_2} + \lambda_6^{g_1}.
\end{aligned}
\end{gather}
Thus for a given weight vector $\lambda$, we take the column of $A_\lambda$ indexed by $\mathbf{g}$ to be the exponent vector of the monomial $m_i$ where $\lambda(m_i) < \lambda(m_j)$ for $i\neq j \in \{1,2\}$; in this case, we will say that $\lambda$ \emph{assigns} $\mathbf{g}$ to tree $\mathcal T_i$. The procedure just described is equivalent to describing an initial ideal of the ideal generated by the image of $\psi_{\mathcal{N}}^{(G,B)}$, where the initial term for each generator, $\text{in}_\lambda(\psi_{\mathcal{N}}^{(G,B)}(q_{\mathbf{g}}))$, is given by the monomial $m_i$.

In Section \ref{sec:solution}, we give a solution to the dimension problem for general group-based models on the 3-sunlet network for all finite abelian groups of odd order at least 7, by constructing $\lambda$ such that $A_\lambda$ has maximal rank.

\subsection{Defining Hyperplanes}

\label{subsec: hyperplanes}

The matrix $A_\lambda$ is determined by inequalities between linear combinations of the coordinates of the weight vector $\lambda$. In this subsection, we construct a hyperplane arrangement $\mathcal{H}_G$, which divides Euclidean space into the regions on which $\Trop(\phi)$ is differentiable and $A_\lambda$ is constant. The hyperplanes themselves correspond to regions on which $\Trop(\phi)$ is not differentiable. By understanding the defining hyperplanes and resulting geometry, we are able to construct a weight vector $\lambda$ so that $A_\lambda$ has the maximum possible rank, and therefore gives the best lower bound on the dimension of the variety $V_{\C{N}}^{(G,B)}$.

\begin{defn}
    \label{def:hyperplane-arrangement}
    A \textit{hyperplane arrangement} is a collection of hyperplanes $\C{H} = \{ H_i \}_{i \in \C{I}}$, $H_i \subset \RR^n$. A connected component of the complement $\RR^n \setminus \C{H}$ is a \textit{chamber of } $\C{H}$. We denote by $C(\C{H})$ the set of chambers of $\C{H}$.
\end{defn}

For the 3-sunlet network, each column of $A_\lambda$ corresponds to a consistent leaf-labelling $\mathbf{g} = (g_1, g_2, g_3)$. The column of $A_\lambda$ corresponding to $\mathbf{g}$ can be one of two vectors, and depends on which inequality the coordinates of $\lambda$ satisfy. Thus, each consistent leaf-labelling $\mathbf{g}$ determines a hyperplane of the arrangement we want to construct for the 3-sunlet network.
As above, for $i=1,2$ we write $m_i(\mathbf{g})$ for the monomial corresponding to the $i^{\text{th}}$ tree in the $\mathbf{g}$-entry of the parametrization map $\phi_\C{N}$. Then, we can define the  hyperplane arrangement corresponding to the 3-sunlet as the collection $\C{H} = \{H_{\mathbf{g}}\,|\ \mathbf{g}=(g_1,g_2,g_3)$ is a consistent leaf-labelling$\}$, where
\begin{align*}
\centering
\begin{split}
    H_\mathbf{g} :&= \{ \lambda \in \RR^{6(\ell+1)} \mid \lambda(m_1(\mathbf{g})) = \lambda(m_2(\mathbf{g})) \}\\ 
    &= \{ \lambda \in \RR^{6(\ell+1)} \mid \lambda_1^{g_1} + \lambda_2^{g_2} + \lambda_3^{g_3} + \lambda_4^{g_1 + g_2} + \lambda_5^{g_1} = \lambda_1^{g_1} + \lambda_2^{g_2} + \lambda_3^{g_3} + \lambda_4^{g_2} + \lambda_6^{g_1} \}.
\end{split}
\end{align*}

In \Cref{lem:inequalities}, we make several observations about the hyperplanes $H_{\mathbf{g}}$. Note that the hyperplane arrangement $\C{H}$ and the resulting chambers $C(\C{H})$ form the Gr\"{o}bner fan of the ideal generated by the image of $\psi_{\mathcal{N}}^{(G,B)}$, denoted $\langle\text{im}\,\psi_{\mathcal{N}}^{(G,B)}\rangle$. In this interpretation, the hyperplanes themselves consist of those points $\lambda$ for which the initial ideal $\text{in}_\lambda(\langle\text{im}\,\psi_{\mathcal{N}}^{(G,B)}\rangle)$ is not a monomial ideal, and the chambers consist of those points $\lambda$ for which the initial ideal $\text{in}_\lambda(\langle\text{im}\,\psi_{\mathcal{N}}^{(G,B)}\rangle)$ is monomial.

We make the following observation about the matrix $A_\lambda$.

\begin{lem}\label{lem:inequalities}
    Let $\mu_g := \lambda_6^g - \lambda_5^g$ for all $g \in G$. The inequalities determining the matrix $A_\lambda$ are:
    \begin{align}
        0 & < \mu_0\  \text{ or \ } \mu_0<0, \emph{ and } \label{eqn:hyperplane1} \\
        \lambda_4^{g_1 + g_2} - \lambda_4^{g_2} & < \mu_{g_1} \text{ or \ }   \mu_{g_1} <  \lambda_4^{g_1 + g_2} - \lambda_4^{g_2}\text{ for all } g_1, g_2 \in G \text{ such that } g_1 \neq 0.\label{eqn:hyperplane2}
    \end{align}
    In particular, the determining inequalities depend only on $g_1$ and $g_2$.
\end{lem}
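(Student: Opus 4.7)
The plan is a direct computation: simplify the defining equation of each hyperplane $H_{\mathbf{g}}$ and then read off the strict inequalities that determine which side of $H_{\mathbf{g}}$ the weight vector $\lambda$ lies on. Once the simplification is done, the two cases in the lemma fall out of a single split on whether $g_1 = 0$ or $g_1 \neq 0$.

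First I would substitute the expressions \eqref{eqn:lambdam1} and \eqref{eqn:lambdam2} into the defining equation $\lambda(m_1(\mathbf{g})) = \lambda(m_2(\mathbf{g}))$ of $H_{\mathbf{g}}$. The three summands $\lambda_1^{g_1}$, $\lambda_2^{g_2}$, $\lambda_3^{g_3}$ appear on both sides and cancel; this cancellation is precisely why the coordinate $g_3$ disappears from the defining data of the arrangement. What remains is
$$\lambda_4^{g_1+g_2} + \lambda_5^{g_1} = \lambda_4^{g_2} + \lambda_6^{g_1},$$
which, upon applying the substitution $\mu_g := \lambda_6^g - \lambda_5^g$, rearranges to the single scalar equation $\mu_{g_1} = \lambda_4^{g_1+g_2} - \lambda_4^{g_2}$. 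Consequently $\lambda$ assigns $\mathbf{g}$ to one of $\mathcal{T}_1$ or $\mathcal{T}_2$ precisely according to the sign of $\mu_{g_1} - (\lambda_4^{g_1+g_2} - \lambda_4^{g_2})$.

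Next I would split into two cases on $g_1$. When $g_1 = 0$ the right-hand side collapses to $\lambda_4^{g_2} - \lambda_4^{g_2} = 0$, so the defining equation reduces to $\mu_0 = 0$, independently of $g_2$; the corresponding strict inequalities are exactly those in \eqref{eqn:hyperplane1}, and all such leaf-labellings $\mathbf{g} = (0, g_2, -g_2)$ contribute the same single hyperplane. When $g_1 \neq 0$, the right-hand side $\lambda_4^{g_1+g_2} - \lambda_4^{g_2}$ is a genuine function of $g_1$ and $g_2$, and the two possible strict inequalities are precisely those recorded in \eqref{eqn:hyperplane2}. The final clause of the lemma is then immediate, since every quantity appearing in the simplified inequalities is indexed by $g_1$, $g_2$, or $g_1 + g_2$.

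There is no real obstacle here: the statement is essentially bookkeeping. The content I would emphasize is structural, namely that the edge coordinates $\lambda_i^g$ for $i \in \{1,2,3\}$ drop out entirely, so the hyperplane arrangement $\C{H}$ lives naturally on the coordinates $\lambda_4, \lambda_5, \lambda_6$ alone. This reduction, together with the fact that the inequalities depend only on $g_1$ and $g_2$ (equivalently, on the pair $(g_1, g_1 + g_2)$), is what makes the arrangement tractable enough to construct the weight vector $\lambda$ achieving the lower bound in \Cref{thm:main}.
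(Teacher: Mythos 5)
Your proposal is correct and follows essentially the same route as the paper: substitute \eqref{eqn:lambdam1} and \eqref{eqn:lambdam2}, cancel the common terms $\lambda_1^{g_1}+\lambda_2^{g_2}+\lambda_3^{g_3}$, and rearrange to $\lambda_4^{g_1+g_2}-\lambda_4^{g_2} < \mu_{g_1}$. The only cosmetic difference is that you make the $g_1=0$ versus $g_1\neq 0$ case split explicit inside the proof, whereas the paper leaves that observation to the discussion immediately following it.
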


\begin{proof}
    The matrix $A_\lambda$ is constant on a region $R$ if and only if each column is constant on $R$, so it follows that the hyperplane arrangement we seek is the union of hyperplane arrangements each of whose regions defines the constant regions of a single column.

    The assignment of columns is equivalent to choosing the direction of the inequality in $\lambda(m_1) < \lambda(m_2)$ or $\lambda(m_1) > \lambda(m_2)$. Without loss of generality, assume $\lambda(m_1) < \lambda(m_2)$. Expanding $\lambda(m_1)$ and $\lambda(m_2)$ as in equations (\ref{eqn:lambdam1})
and (\ref{eqn:lambdam2}) we obtain
$$ \lambda_1^{g_1} + \lambda_2^{g_2} + \lambda_3^{g_3} + \lambda_4^{g_1 + g_2} + \lambda_5^{g_1} < \lambda_1^{g_1} + \lambda_2^{g_2} + \lambda_3^{g_3} + \lambda_4^{g_2} + \lambda_6^{g_1} $$
    Cancelling terms we see that
    \begin{align*}
        \lambda(m_1) < \lambda(m_2) \iff & \lambda_4^{g_1 + g_2} + \lambda_5^{g_1} 
        <
        \lambda_4^{g_2} + \lambda_6^{g_1} \\
        \iff & \lambda_4^{g_1 + g_2} - \lambda_4^{g_2} < \lambda_6^{g_1} - \lambda_5^{g_1}
        \\
        \iff & \lambda_4^{g_1 + g_2} - \lambda_4^{g_2} < \mu_{g_1}.
    \end{align*}
\end{proof}
Inequality (\ref{eqn:hyperplane1}) controls the assignment of the $|G| = \ell+1$ columns with label $(0, g_2, -g_2)$, for $g_2 \in G$. The inequalities given by (\ref{eqn:hyperplane2}) each control a single column with label $(g_1, g_2, -g_1 - g_2)$, with $g_1 \neq 0$. Thus, by crossing the hyperplane $\mu_0=0$ exactly $\ell + 1$ of the columns of $A_\lambda$ change. While, when crossing a hyperplane of the form $ \mu_{g_1} = \lambda_4^{g_1 + g_2} - \lambda_4^{g_2} $, exactly one of the columns of $A_\lambda$ changes. Observe that we cannot simply choose which inequalities are satisfied and expect to find a weight vector $\lambda$ that achieves this. That is, some combinations of inequalities cannot be simultaneously satisfied.  Thus, for a given $G$, it is unclear how many chambers lie in the hyperplane arrangement $\mathcal{H}$, but it is at most $2^{\ell(\ell +1) + 1}$. \Cref{lem:ginverseg} in the next section gives some restrictions on these assignments for groups of odd order, and we will see further examples in \Cref{sec:experiments}.

\begin{example}
    Let $G = \mathbb{Z}/2\mathbb{Z}$. We have four consistent leaf-labellings given by $(0,0,0), (0,1,1), (1,0,1), (1,1,0)$. The image of $\phi_{\mathcal{N}}^{G}$ is given by
    \begin{align*}
        \phi_{\mathcal{N}}^{G}(w)_{000} &= w_1^{0}w_2^{0}w_3^{0}w_4^{0}w_5^{0} + w_1^{0}w_2^{0}w_3^{0}w_4^{0}w_6^{0}, \\
        \phi_{\mathcal{N}}^{G}(w)_{011} &= w_1^{0}w_2^{1}w_3^{1}w_4^{1}w_5^{0} + w_1^{0}w_2^{1}w_3^{1}w_4^{1}w_6^{0}, \\
        \phi_{\mathcal{N}}^{G}(w)_{101} &= w_1^{1}w_2^{0}w_3^{1}w_4^{1}w_5^{1} + w_1^{1}w_2^{0}w_3^{1}w_4^{0}w_6^{1}, \\
        \phi_{\mathcal{N}}^{G}(w)_{110} &= w_1^{1}w_2^{1}w_3^{0}w_4^{0}w_5^{1} + w_1^{1}w_2^{1}w_3^{0}w_4^{1}w_6^{1}, \\
    \end{align*}
    where in each case the first monomial, $m_1$ corresponds to $\mathcal{T}_1$, and the second monomial $m_2$ corresponds to $\mathcal{T}_2$. Pick a weight vector $\lambda$ with $\lambda_4^0 = 3, \lambda_4^1 = 1, \mu_0 = 1$, and $\mu_1 = -1$. We construct the matrix $A_\lambda$. The columns of $A_\lambda$ are indexed by the 4 consistent leaf-labellings, and the rows are indexed by the 12 parameters $w_i^g$ for $i=1,\ldots,6$ and $g = 0,1$. For columns $(0,0,0)$ and $(0,1,1)$ we have that $0 < \mu_0 = 1$, so in these cases the monomial from $\mathcal{T}_1$ is chosen. For $(1,0,1)$ we have that $\lambda_4^1 - \lambda_4^0 = -2 < \mu_1 = -1$ so the monomial from $\mathcal{T}_1$ is chosen. For $(1,1,0)$ we have that $\lambda_4^0 - \lambda_4^1 = 2 > \mu_1 = -1$ so the monomial from $\mathcal{T}_2$ is chosen. This gives us the following matrix $A_\lambda$
    \[ A_\lambda = 
\begin{blockarray}{ccccc}
& \mathcal{T}_1 & \mathcal{T}_1 & \mathcal{T}_1 & \mathcal{T}_2 \\ [2pt]
& 000 & 011 & 101 & 110 \\ [2pt]
\begin{block}{c(cccc)}
  w_1^0 & 1 & 1 & 0 & 0 \\ [3pt]
  w_1^1 & 0 & 0 & 1 & 1 \\ [3pt]
  w_2^0 & 1 & 0 & 1 & 0 \\ [3pt]
  w_2^1 & 0 & 1 & 0 & 1 \\ [3pt]
  w_3^0 & 1 & 0 & 0 & 1 \\ [3pt]
  w_3^1 & 0 & 1 & 1 & 0 \\ [3pt]
  w_4^0 & 1 & 0 & 0 & 1 \\ [3pt]
  w_4^1 & 0 & 1 & 1 & 0 \\ [3pt]
  w_5^0 & 1 & 1 & 0 & 0 \\ [3pt]
  w_5^1 & 0 & 0 & 1 & 0 \\ [3pt]
  w_6^0 & 0 & 0 & 0 & 0 \\ [3pt]
  w_6^1 & 0 & 0 & 0 & 1 \\ [3pt]
\end{block}
\end{blockarray}\ ,
 \]
 where the entry corresponding to column $g_1g_2g_3$ and row $w_i^g$ is the exponent of $w_i^g$ in the monomial from the $g_1g_2g_3$ entry of the expression for $\phi_\mathcal{N}^G(w)$ corresponding to the tree chosen.
\end{example}

To end this section, we make some observations on the hyperplane arrangement $\C{H}$ coming from the 3-sunlet and a group-based model $(G,B)$. We will define the $\emph{rank}$ of a chamber $C \in C(\C{H})$ as the rank of the corresponding tropical Jacobian matrix $A_\lambda$ for all $\lambda\in C$.

First, observe that when we choose $\lambda$ so that all columns correspond to $\mathcal{T}_1$ or $\mathcal{T}_2$ (this is possible by choosing $\mu_g$ either very large or very small for all $g$), then $A_\lambda$ is equal to the corresponding tropical Jacobian for the group-based phylogenetic tree model for $\mathcal{T}_1$ or $\mathcal{T}_2$ (albeit with some extra rows of 0's corresponding to the parameters $w_6^g$ or $w_5^g$ respectively). Since, the matrix $A_\lambda$ has rank equal to the dimension of the toric variety corresponding to $A_{\lambda}$ (see e.g., \cite[Lemma~4.2]{sturmfels1996grobner}), and these varieties are well studied for trees, in both cases, $\rank A_\lambda = 3\ell + 1$ (see e.g. \cite[Lemma~4.1]{gross2023dimensions}). This describes two chambers with rank equal to $3\ell + 1$. In fact, there are two more, as the next proposition demonstrates.

\begin{proposition}\label{prop:hyperplane0rank}
    Let $C$ and $C'$ be two adjacent chambers in $C(\C{H})$, separated by the hyperplane $\mu_0=0$. Then $\rank C = \rank C'$.
\end{proposition}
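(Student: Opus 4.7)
The plan is to show that the matrices $A_\lambda$ and $A_{\lambda'}$ for $\lambda \in C$ and $\lambda' \in C'$ differ only by a swap of two rows, which preserves rank.

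First I would identify which columns of $A_\lambda$ are affected by crossing the hyperplane $\mu_0 = 0$. By \Cref{lem:inequalities}, the sign of $\mu_0$ controls precisely the assignment of the $\ell+1$ columns indexed by consistent leaf-labellings of the form $\mathbf{g} = (0, g_2, -g_2)$ for $g_2 \in G$, while all other columns are governed by inequalities of the form $\lambda_4^{g_1+g_2}-\lambda_4^{g_2} \lessgtr \mu_{g_1}$ with $g_1 \neq 0$, which are unaffected by crossing $\mu_0 = 0$.

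Next I would compare $m_1(\mathbf{g})$ and $m_2(\mathbf{g})$ for such a labelling. When $g_1 = 0$, we have $g_1 + g_2 = g_2$, so
\begin{align*}
m_1(\mathbf{g}) &= w_1^{0}w_2^{g_2}w_3^{-g_2}w_4^{g_2}w_5^{0},\\
m_2(\mathbf{g}) &= w_1^{0}w_2^{g_2}w_3^{-g_2}w_4^{g_2}w_6^{0}.
\end{align*}
These two monomials agree in every factor except for $w_5^0$ versus $w_6^0$. Therefore, when $\lambda$ crosses $\mu_0 = 0$ and the assignment of these $\ell+1$ columns flips from $\C{T}_1$ to $\C{T}_2$ (or vice versa), the only entries of $A_\lambda$ that change lie in the two rows indexed by $w_5^0$ and $w_6^0$.

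The last step is to observe that the effect on those two rows is a swap. Since $g_1 = 0$ is necessary for either row to have a nonzero entry in a given column (rows $w_5^0$ and $w_6^0$ are zero on all columns indexed by labellings with $g_1 \neq 0$), in chamber $C$ one of these rows carries $\ell+1$ ones on the affected columns while the other is identically zero, and in $C'$ the situation is reversed. Thus $A_{\lambda'}$ is obtained from $A_\lambda$ by interchanging the rows indexed by $w_5^0$ and $w_6^0$, which is a rank-preserving elementary row operation, giving $\rank C = \rank C'$. There is no real obstacle here; the content is the observation that $g_1 = 0$ forces the monomials $m_1(\mathbf{g})$ and $m_2(\mathbf{g})$ to coincide in the $w_4$-factor, which is what makes the crossing act on $A_\lambda$ as a clean row swap.
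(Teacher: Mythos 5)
Your proposal is correct and follows essentially the same argument as the paper: identify that only the columns indexed by $(0,g_2,-g_2)$ flip, note that for these columns $m_1$ and $m_2$ differ only in $w_5^0$ versus $w_6^0$ (since $g_1+g_2=g_2$), and conclude that $A_{\lambda'}$ is obtained from $A_\lambda$ by swapping the rows $w_5^0$ and $w_6^0$, which preserves rank. The only cosmetic difference is that you make explicit the cancellation in the $w_4$-factor, which the paper leaves implicit.
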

\begin{proof}
    Let $\lambda \in C$ and $\lambda' \in C'$ be weight vectors, and consider the matrices $A_\lambda$ and $A_{\lambda'}$. Suppose, without loss of generality, that in $A_\lambda$ the columns indexed by $(0,g,-g)$ with $g\in G$ correspond to $\C T_1$, and in $A_{\lambda'}$ they correspond to $\C T_2$. Thus, in the columns of $A_{\lambda}$ indexed by $(0,g,-g)$, entries in the row indexed by $w_5^0$ are 1, and entries in the row indexed by $w_6^0$ are 0. On the other hand, in the columns of $A_{\lambda'}$ indexed by $(0,g,-g)$, entries in the row indexed by $w_5^0$ are 0, and entries in the row indexed by $w_6^0$ are 1. All other entries of $A_{\lambda}$ and $A_{\lambda'}$ are the same. Finally, observe that in the rows indexed by $w_5^0$ and $w_6^0$, all entries are $0$ outside the columns indexed by $(0,g,-g)$ for $g\in G$. Thus, the difference between $A_{\lambda}$ and $A_{\lambda'}$ is a swap of the rows indexed by $w_5^0$ and $w_6^0$, and this does not affect the rank.
\end{proof}

Applying \Cref{prop:hyperplane0rank} to the two chambers of rank $3\ell + 1$ found above, we have four chambers of this rank. These come from column assignments where all columns corresponding to a leaf-labelling $(0,g,-g)$ are assigned to $\C{T}_i$, and all other columns assigned to $\C{T}_j$ for $i,j\in \{1,2\}$. It is easy to see that all other chambers have rank strictly greater than this (changing any column corresponding to $(g_1, g_2, g_3)$ with $g_1\neq 0$ will introduce a $1$ into the row $w_5^{g_1}$ or $w_6^{g_1}$ which was previously all 0's), thus we have exactly four chambers of minimal rank $3\ell + 1$.

\begin{lem}\label{lem:minimalRankChambers}
    Let $\C{H}$ be the hyperplane arrangement given by the 3-sunlet network and group-based model $(G,B)$. Then $\C{H}$ has exactly four chambers of minimal rank $3\ell + 1$.\qed
\end{lem}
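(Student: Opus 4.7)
The plan splits into two steps: (i) exhibiting four chambers of rank $3\ell+1$, and (ii) verifying no further chamber achieves this rank. Both are essentially carried out in the discussion immediately preceding the lemma; what remains is to formalize that outline.

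For step (i), I would construct the weight vectors explicitly. By \Cref{lem:inequalities}, choosing $\lambda$ with every $\mu_g$ sufficiently large (respectively, sufficiently small) assigns every column of $A_\lambda$ to $\C{T}_1$ (respectively, $\C{T}_2$). In each case, deleting the $\ell+1$ zero rows (indexed by $w_6^g$ or $w_5^g$ respectively) leaves the tropical Jacobian of the group-based model on the corresponding tree, which has rank $3\ell+1$ by \cite[Lemma~4.1]{gross2023dimensions}. Applying \Cref{prop:hyperplane0rank} across the hyperplane $\mu_0=0$ to each of these two chambers would then produce two further chambers of the same rank, giving four in total.

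For step (ii), I would show any other chamber $C$ has rank strictly greater than $3\ell+1$. The four chambers above are characterized by the property that every column indexed by $(g_1,g_2,g_3)$ with $g_1\neq 0$ is assigned to a single tree; hence $C$ must contain at least one column with $g_1 \neq 0$ assigned to $\C{T}_1$ and at least one with $g_1' \neq 0$ assigned to $\C{T}_2$. My strategy is to trace a path of adjacent chambers from the reference chamber (``all columns to $\C{T}_1$'') to $C$, crossing one hyperplane at a time. The first crossing of a hyperplane $\mu_{g_1}=\lambda_4^{g_1+g_2}-\lambda_4^{g_2}$ with $g_1\neq 0$ switches a single column of $A_\lambda$ to $\C{T}_2$ and introduces a solitary nonzero entry into the row $w_6^{g_1}$, which was previously zero.

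The main obstacle will be proving that this solitary $1$ strictly increases the rank, since a single new nonzero entry need not create a pivot. I would handle this by examining the $m_2$-exponent structure of the switched column and showing that no linear combination of the other rows can reproduce the standard basis vector supported at that column, since any such combination would be forced to produce nonzero entries in other columns sharing the parameter $w_1^{g_1}$. A more robust fallback, useful once several columns have been switched, would be to exhibit $3\ell+2$ linearly independent rows in $A_\lambda$ directly for any non-minimal chamber: $3\ell+1$ rows spanning the tree portion of $A_\lambda$, together with one of the $w_6$-rows made nonzero by the switches.
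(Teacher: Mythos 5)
Your step (i) is exactly the paper's argument: the two chambers in which every column is assigned to a single tree have rank $3\ell+1$ because, after discarding the zero rows, $A_\lambda$ is the exponent matrix of the tree model, and \Cref{prop:hyperplane0rank} applied across $\mu_0=0$ produces the other two chambers. Like the paper, you correctly characterize these four as the assignments in which all columns with $g_1\neq 0$ go to the same tree.

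Your primary strategy for step (ii), however, does not work as described. Walking a path of adjacent chambers from the all-$\mathcal{T}_1$ chamber to $C$ and showing that the \emph{first} crossing raises the rank to $3\ell+2$ says nothing about the rank of $C$ itself: rank is not monotone along such a path, and indeed the proposition immediately following this lemma shows that each subsequent crossing may \emph{lower} the rank by one. So the conclusion ``$\operatorname{rank} C > 3\ell+1$'' cannot be reached by controlling only the first step. The argument must be carried out directly inside the chamber $C$, which is what your fallback proposes and what the paper's parenthetical remark gestures at: in any non-minimal chamber there is some $g^*\neq 0$ for which both $r_5^{g^*}$ and $r_6^{g^*}$ are nonzero, and one must show that at least one of them lies outside the span of the rows $r_1^g, r_2^g, r_3^g$, which always span the $(3\ell+1)$-dimensional $3$-star row space regardless of the column assignments. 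That independence is the genuine content of the lemma; you rightly flag it as the main obstacle, but neither of your two sketches discharges it (your first sketch argues about columns sharing $w_1^{g_1}$, which is not the relevant obstruction). A clean way to finish: any vector in the $3$-star row space has $(g_1,g_2,g_3)$-entry of the form $a_{g_1}+b_{g_2}+c_{g_3}$; the vanishing of $r_5^{g^*}$ on every column with first coordinate different from $g^*$ forces $b_{g_2}+c_{-g_1-g_2}$ to be constant in $g_2$ for each such $g_1$, which (for $|G|\geq 3$) propagates to $g_1=g^*$ and contradicts the fact that $r_5^{g^*}$ takes both values $0$ and $1$ on the columns with first coordinate $g^*$. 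To be fair to you, the paper also leaves this verification implicit behind ``it is easy to see.''
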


\begin{proposition}
    Let $\C{H}$ be the hyperplane arrangement given by the 3-sunlet network and group-based model $(G,B)$. Given a chamber $C\in C(\C{H})$ of rank $r$, the rank of each adjacent chamber is between $r-1$ and $r+1$.
\end{proposition}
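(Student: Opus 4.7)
The plan is to observe that adjacent chambers in $C(\mathcal{H})$ are separated by exactly one hyperplane of the arrangement, and then to analyze how $A_\lambda$ changes as we cross each type of hyperplane described in \Cref{lem:inequalities}.

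First I would split into two cases according to the type of hyperplane separating $C$ from an adjacent chamber $C'$. By \Cref{lem:inequalities}, the hyperplanes of $\mathcal{H}$ come in two flavors: the single hyperplane $\mu_0 = 0$, and the hyperplanes of the form $\mu_{g_1} = \lambda_4^{g_1 + g_2} - \lambda_4^{g_2}$ with $g_1 \neq 0$. In the first case, \Cref{prop:hyperplane0rank} directly gives $\rank C = \rank C'$, so the rank difference is $0$, which lies in $[r-1, r+1]$.

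In the second case, the hyperplane $\mu_{g_1} = \lambda_4^{g_1+g_2} - \lambda_4^{g_2}$ (with $g_1 \neq 0$) controls only the single column of $A_\lambda$ indexed by the consistent leaf-labelling $(g_1, g_2, -g_1 - g_2)$, as noted in the discussion following \Cref{lem:inequalities}. Hence, for $\lambda \in C$ and $\lambda' \in C'$, the matrices $A_\lambda$ and $A_{\lambda'}$ differ in exactly one column. I would then invoke the standard linear algebra fact that replacing a single column of a matrix changes the rank by at most $1$: letting $B$ be the submatrix obtained by deleting the differing column, we have $\rank B \leq \rank A_\lambda \leq \rank B + 1$ and the same bounds with $A_{\lambda'}$, so $|\rank A_\lambda - \rank A_{\lambda'}| \leq 1$. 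Therefore $\rank C' \in \{r-1, r, r+1\}$.

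Since these are the only two cases, combining them yields the desired bound. There is no real obstacle here: the bulk of the technical work is already contained in \Cref{lem:inequalities} (characterizing the hyperplanes and the columns they govern) and \Cref{prop:hyperplane0rank} (handling the exceptional hyperplane $\mu_0 = 0$). The only remaining input is the elementary observation on how a single-column change affects rank.
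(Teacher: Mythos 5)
Your proposal is correct and follows essentially the same approach as the paper: case-split on the hyperplane type, apply \Cref{prop:hyperplane0rank} for the $\mu_0=0$ hyperplane, and note that crossing any other hyperplane changes a single column and hence the rank by at most one. The only difference is that you spell out the elementary single-column rank argument that the paper simply asserts.
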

\begin{proof}
As described above, moving to an adjacent chamber is equivalent to either swapping the assignment of all columns indexed by $(0,g,-g)$ for $g\in G$, or swapping the assignment of a single column indexed by $(g,h,-g,-h)$ with $g\neq 0$. In the first case, the rank does not change by Proposition \ref{prop:hyperplane0rank}. In the second case, the rank can change by at most $1$.
\end{proof}

\Cref{prop:hyperplane0rank} above shows that every chamber is adjacent to at least one other chamber of equal rank. We would also like to know whether every chamber is adjacent to another of chamber of strictly greater rank, or equivalently, whether the only maximal chambers (with respect to rank) are the globally maximal chambers. As we will see in \Cref{subsec:maximalChambers}, the answer to this is no, and there do exist locally maximal chambers.

\section{Dimension for Odd Order Groups}
\label{sec:solution}
In this section we give the dimension of the 3-sunlet variety for general group-based models where $G$ is a finite abelian group of odd order at least $5$. Our results agree with \cite[Conjecture~7.1]{gross2023dimensions}. Our method is to find a weight vector $\lambda$ so that the corresponding tropical Jacobian $A_\lambda$ has rank greater than or equal to $5\ell + 1$, where $|G| = \ell + 1$, as detailed in Section \ref{sec:background}.

First we set out some notation. Assume $|G|$ is odd and let $|G| = 2t + 1$. Choose a subset $X \subset G$ with $|X| = t$ and satisfying the property that if $g\in X$ then $-g \not\in X$. In particular, $0 \not\in X$. For example, for $G$ equal to the cyclic group of order $2t + 1$ we could have $X = \{ 1,2,\ldots, t\}$. We will need the following lemma.

\begin{lem}\label{lem:hg}
    Let $G$ be an abelian group with $|G| = 2t + 1$, where $t\in\mathbb{N}$ and $t \geq 3$, i.e. $|G|$ is odd with $|G| \geq 7$. Let $X\subset G$ be a subset with $|X| = t$ such that if $g\in X$ then $-g \not\in X$. Then there exists an injective function $h:X \to G\setminus X$ such that $h_g:=h(g)$ is not equal to $ 0,-g,$ or $2g$.
\end{lem}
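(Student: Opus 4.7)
The plan is to encode the existence of $h$ as a bipartite matching problem and invoke Hall's marriage theorem. Consider the bipartite graph with parts $X$ and $G \setminus X$, where $g \in X$ is joined to $u \in G \setminus X$ exactly when $u \notin F_g := \{0, -g, 2g\}$. An injection $h$ with the claimed properties is precisely a matching saturating $X$. Since $|G \setminus X| = t+1$, for every $S \subseteq X$ the neighborhood is
$$N(S) = (G \setminus X) \setminus \bigcap_{g \in S} F_g,$$
so Hall's condition reduces to showing $|\bigcap_{g \in S} F_g \cap (G \setminus X)| \leq t + 1 - |S|$ for every $S \subseteq X$.

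The essential observation is that because $|G|$ is odd, the map $g \mapsto 2g$ is a bijection on $G$. Thus for any fixed $u \in G$, the equation $u = -g$ has a unique solution $g = -u$ and $u = 2g$ has a unique solution $g$. In particular, a nonzero element $u$ can lie in $F_g \setminus \{0\}$ for at most two distinct values of $g$.

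I would then split Hall's check into three cases. For $|S| = 1$ we have $|N(\{g\})| \geq (t+1) - |F_g| \geq t - 2 \geq 1$ since $t \geq 3$. For $|S| \geq 3$, any nonzero element of $\bigcap_{g \in S} F_g$ would have to equal $-g$ or $2g$ for each of the three or more distinct $g \in S$, contradicting the uniqueness above; hence $\bigcap_{g \in S} F_g \subseteq \{0\}$ and $|N(S)| \geq t \geq |S|$. For $|S| = 2$, say $S = \{g_1, g_2\}$ with $g_1 \neq g_2$, I would argue that $|F_{g_1} \cap F_{g_2}| \leq 2$: if instead this intersection had size $3$, then both $F_{g_i}$ have size $3$ and coincide, forcing $\{-g_1, 2g_1\} = \{-g_2, 2g_2\}$. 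Ruling out $-g_1 = -g_2$ and $2g_1 = 2g_2$ (each of which gives $g_1 = g_2$), only $-g_1 = 2g_2$ and $2g_1 = -g_2$ remain; substituting yields $g_2 = 4g_2$, i.e.\ $3g_2 = 0$, but then $-g_2 = 2g_2$ forces $|F_{g_2}| = 2$, a contradiction. Hence $|N(S)| \geq t - 1 \geq 2$.

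With Hall's condition verified in all three cases, the desired injection $h$ exists by Hall's marriage theorem. The main obstacle is the $|S| = 2$ analysis, where the subtle possibility that $F_{g_1}$ and $F_{g_2}$ accidentally coincide must be ruled out; this is precisely the step at which the oddness of $|G|$ is used most essentially, through the invertibility of multiplication-by-$2$.
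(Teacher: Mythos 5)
Your proof is correct, and it takes a genuinely different route from the paper. The paper constructs $h$ greedily: it restricts the codomain to $K=\{-g \mid g\in X\}=G\setminus(X\cup\{0\})$, assigns $h_{g_i}$ for $i=1,\dots,t-3$ one at a time (always possible because at least three choices remain and at most two are forbidden), and then resolves the last three elements by an explicit worst-case analysis in which $K=\{2g_{t-2},2g_{t-1},2g_t\}$. You instead verify Hall's condition for the bipartite graph on $X$ and $G\setminus X$ with forbidden sets $F_g=\{0,-g,2g\}$, splitting on $|S|=1$, $|S|=2$, and $|S|\geq 3$; your case analysis is sound, including the key step for $|S|=2$ where $F_{g_1}=F_{g_2}$ is ruled out via $3g_2=0$ forcing $|F_{g_2}|=2$. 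Your approach is arguably more systematic: it isolates exactly where oddness of $|G|$ enters (injectivity of $g\mapsto 2g$, hence each nonzero $u$ lies in at most two sets $F_g$), and it avoids the somewhat terse endgame of the paper's greedy argument, at the cost of invoking Hall's theorem rather than giving an explicit construction. Both arguments use $t\geq 3$ in the same essential way (three forbidden values versus $t+1$ available targets).
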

\begin{proof}
    Let $X = \{g_1, \ldots, g_t\}$ and let $k_i = -g_i \not\in X$. We will give an iterative method of choosing the value of $h_{g_i}$, where at each step we update the set of available choices. 
    To begin with, let $K = \{k_1, \ldots, k_t\}$, this will be our initial set of available choices. For each $i = 1,\ldots, t-3$, choose the value of $h_{g_i}$ to be an element of $K$ satisfying $h_{g_i} \neq k_i$ and $h_{g_i} \neq 2g_i$ (should $2g_i$ be in $K$), and remove the value of $h_{g_i}$ from $K$. This is possible, since at each stage, $K$ contains at least $3$ elements.
    
    Now it remains to choose $h_{g_{t-2}}, h_{g_{t-1}}$ and $h_{g_{t}}$. In the worst case, we have $K = \{k_{t-2}, k_{t-1}, k_t\} = \{2g_{t-2}, 2g_{t-1}, 2g_{t}\}.$ For each $g_i$ we take $h_{g_i}$ to be the (possibly unique) $k_j$ such that $i\neq j$ and $k_j \neq 2g_i$. 
\end{proof}

The next lemma demonstrates that there are relationships among the inequalities in \Cref{lem:inequalities}. 

\begin{lem}\label{lem:ginverseg}
Fix $g\in G$ such that $g \neq -g$, and suppose that we have a weight vector $\lambda$ such that for some $h\in G$ the consistent leaf labelling $(g,h,-g-h)$ is assigned to $\C T_1$ and $(g, h', -g-h')$ is assigned to $\C T_2$ for all $h'\ \neq h$. If there exists $k \in G$ such that $(-g, k, g - k)$ is assigned to $\C T_2$ then $(-g, g+h, -h)$ is assigned to $\C T_2$.
\end{lem}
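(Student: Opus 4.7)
The plan is to unpack every hypothesis into a concrete inequality via \Cref{lem:inequalities} and then case split on whether the given $k$ equals $g+h$. Using the lemma, the assumption that $(g,h,-g-h)$ is assigned to $\C T_1$ becomes $\lambda_4^{g+h}-\lambda_4^h < \mu_g$; the assumption that $(g,h',-g-h')$ is assigned to $\C T_2$ for every $h'\neq h$ becomes $\lambda_4^{g+h'}-\lambda_4^{h'} > \mu_g$ for all such $h'$; and the assumption that $(-g,k,g-k)$ is assigned to $\C T_2$ becomes $\lambda_4^{k-g}-\lambda_4^k > \mu_{-g}$. The target assertion, that $(-g,g+h,-h)$ is assigned to $\C T_2$, translates to $\lambda_4^{h}-\lambda_4^{g+h} > \mu_{-g}$.

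If $k=g+h$, then the hypothesis on $k$ is literally the target inequality, so nothing further is needed. Otherwise $k\neq g+h$, and hence $k-g\neq h$, so substituting $h'=k-g$ into the second family of inequalities gives $\lambda_4^{k}-\lambda_4^{k-g} > \mu_g$, equivalently $\lambda_4^{k-g}-\lambda_4^k < -\mu_g$. Comparing this with the third hypothesis $\lambda_4^{k-g}-\lambda_4^k > \mu_{-g}$ forces $\mu_{-g} < -\mu_g$. Rewriting the first hypothesis as $\lambda_4^{h}-\lambda_4^{g+h} > -\mu_g$ and chaining yields $\lambda_4^{h}-\lambda_4^{g+h} > -\mu_g > \mu_{-g}$, which is the target inequality.

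The only nontrivial move in this argument is the substitution $h'=k-g$, which converts the $-g$-hypothesis on the label $(-g,k,g-k)$ into a $g$-hypothesis on the label $(g,k-g,-k)$ and extracts the sign relation $\mu_g+\mu_{-g}<0$. Once this relation is in hand, the assumption that $h$ is the unique $\C T_1$-assignment among the $(g,\cdot,\cdot)$-labels immediately supplies the required lower bound on $\lambda_4^{h}-\lambda_4^{g+h}$. I do not expect any additional obstacles beyond spotting this substitution; the rest is an elementary chaining of inequalities.
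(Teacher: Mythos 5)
Your proof is correct and takes essentially the same route as the paper's: the same translation of all hypotheses into inequalities via \Cref{lem:inequalities}, the same case split on whether $k=g+h$, and the same key substitution $h'=k-g$. The only cosmetic difference is that you chain the final inequality through the intermediate quantity $-\mu_g$ (first extracting $\mu_{-g}<-\mu_g$), whereas the paper chains through $\lambda_4^{h'}-\lambda_4^{g+h'}$ directly.
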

\begin{proof}
By assumption, since $\lambda$ assigns $(g,h,-g-h)$ to $\C T_1$ and $(g, h', -g-h')$ to $\C T_2$, we have $\lambda_4^{g+h} - \lambda_4^h < \mu_g$ and $\lambda_4^{g+h'} - \lambda_4^{h'} > \mu_g$ for all $h'\neq h$ (see proof of Lemma \ref{lem:inequalities}). Thus $\lambda_4^{g+h} - \lambda_4^h <  \lambda_4^{g+h'} - \lambda_4^{h'}$ for all $h'\neq h$. Now if $k=g + h$ then the result is tautological, so assume $k\neq g+h$. Let $h' = k -g$ so that $(-g, k, g - k) = (-g, g+h', -h')$. Now if $(-g, g+h', -h')$ is assigned to $\C T_2$ then $\lambda_4^{h'} - \lambda_4^{g + h'} > \mu_{-g}$. Since  $\lambda_4^{h'} - \lambda_4^{g + h'} = -(\lambda_4^{g + h'} - \lambda_4^{h'})$ and  $-(\lambda_4^{g + h'} - \lambda_4^{h'}) < -(\lambda_4^{g+h} - \lambda_4^h)$, we have $ -(\lambda_4^{g+h} - \lambda_4^h) = \lambda_4^h - \lambda_4^{g+h} > \mu_{-g}$. Thus, $\lambda(m_1((-g, g+h, -h)))>\lambda(m_2((-g, g+h, -h)))$, and consequently, $(-g, g+h, -h)$ is assigned to $\C T_2$.
\end{proof}

Next we describe a procedure for choosing $\lambda$ so that the matrix $A_\lambda$ has rank equal to the expected dimension.  This procedure will be illustrated below in Example \ref{exm:Z3}. For ease of notation, we will write $\nu$ for $\lambda_4 \in \mathbb{R}^{2t+1}$, so that $\nu_g = \lambda_4^g$ for all $g\in G$. In this notation, for a leaf labeling $\mathbf g = (g_1,g_2,-g_1-g_2)$,
$$ \lambda (m_1) < \lambda (m_2) \iff \nu_{g_1+g_2} - \nu_{g_2} < \mu_{g_1}.$$
This means $\mathbf{g} = (g_1,g_2,-g_1-g_2)$ is assigned to $\mathcal{T}_1$ if and only if
$\nu_{g_1+g_2} - \nu_{g_2} < \mu_{g_1},$
and $\mathbf{g}$ is assigned to $\mathcal{T}_2$ if and only if
$\nu_{g_1+g_2} - \nu_{g_2} > \mu_{g_1}.$

Choose $\nu$ such that $\nu_g \geq 0$ for all $g\in G$, with $\nu_0$ large enough and all $\nu_g$ small enough such that for all $h, h', k' \in G\setminus\{0\}$ we have $\nu_0 - \nu_h > \nu_{h'} - \nu_{k'} > \nu_h - \nu_0$.  Fix $g \in X$. By our choice of $\nu$, we can find $\mu_g$ such that $\nu_0 - \nu_{-g} > \mu_g$ and $\mu_g > \nu_{g+h} - \nu_h$ for all $h \in G$ with $h\neq -g$. Then for the consistent leaf-labelling $(g, -g, 0)$ we have $\nu_0 - \nu_{-g} > \mu _g$, so $(g, -g, 0)$ is assigned to $\C T_2$. For all $h\neq -g$ we have $\nu_{g + h} - \nu_h < \mu_g$ so the consistent leaf-labelling $(g, h, -g-h)$ is assigned to $\mathcal{T}_1$.

Next, consider $-g\not\in X$. In this case, for all $h\in G$ the consistent leaf-labelling $\mathbf{g'} = (-g, h, g -h)$ is assigned to $\mathcal{T}_1$ if and only if
$$\nu_{-g+h} - \nu_h < \mu_{-g},$$
and $\mathbf{g'}$ is assigned to $\mathcal{T}_2$ if and only if
$$\nu_{-g+h} - \nu_h > \mu_{-g}.$$

Choose $\mu_{-g} = -\mu_g$. Then for $(-g, 0, g)$ we have 
$$\nu_{-g} - \nu_{0} = -(\nu_0 - \nu_{-g}) < -\mu_{g} = \mu_{-g},$$
and thus $(-g, 0, g)$ is assigned to $\mathcal{T}_1$. Now, consider the inequality $\mu_g > \nu_{g+h} - \nu_{h}$, which holds for all $h \neq -g$. Write $h = -g + k$ with $k\neq 0$. Then we have $\mu_g > \nu_{k} - \nu_{-g+k}$ and therefore $\nu_{-g+k} - \nu_{k} > -\mu_g = \mu_{-g}$. Thus $(-g, k, g-k)$ is assigned to $\mathcal{T}_2$ for all $k\neq 0$. 

To summarize, at this point, we have found $\nu, \mu_g$ and $\mu_{-g}$ to give us the following assignments of the consistent leaf-labellings
 $(g,h,-g-h)$ and $(-g,h, g-h)$ for all $h\in G$:
 \begin{itemize}
      \item To $\mathcal{T}_1$ we have assigned  $(-g, 0, g)$ and $(g,h,-g-h)$ for all $h \neq -g$.
     \item To $\mathcal{T}_2$ we have assigned $(g,-g,0)$ and $(-g, h, g-h)$ for all $h \neq 0$.
 \end{itemize}

We repeat the above procedure for every $g\in X$. Then for each of the $t$ pairs $\{g, -g\} \subset G\setminus\{0\}$ we have $\ell+1$ consistent leaf-labellings assigned to $\mathcal{T}_1$ and $\ell+1$ consistent leaf-labellings assigned to $\mathcal{T}_2$. Now, for all $h\in G$ assign $(0,h, -h)$ to $\mathcal{T}_2$ so that we have $t(\ell+1) + (\ell + 1) = (t+1)(2t+1)$ consistent leaf-labellings assigned to $\mathcal{T}_2$ and $t(2t + 1)$ consistent leaf-labellings assigned to $\mathcal{T}_1$. Note that for $t \geq 2$ we have $2\ell = 4t < t(2t+1) < 2t(2t-1) = \ell(\ell-1)$.

We give a small example to illustrate.
\begin{example}\label{exm:Z3}
    Let $G = \mathbb{Z}/3\mathbb{Z} = \{0,1,2\}$, and let $X = \{1\}$. Pick $\nu_0 = 10$, $\nu_1 = 0$, and $\nu_2 = 2$. When $g=1$, for the consistent leaf-labellings $(1,2,0), (1,1,1)$, and $(1,0,2)$ respectively we have
    $$ \nu_0 - \nu_2 = 8 > \nu_2 - \nu_1 = 2 > \nu_1 - \nu_0 = -10.$$
    We choose $\mu_1 = 7$ so that $(1,2,0)$ is assigned to $\mathcal{T}_2$; and $(1,1,1)$ and $(1,0,2)$ are assigned to $\mathcal{T}_1$.

    When $g=2$, for the consistent leaf-labellings $(2,1,0), (2,2,2)$ and $(2,0,1)$ respectively we have
    $$ \nu_{0} - \nu_1 = 10 > \nu_{1} - \nu_{2} = -2 > \nu_{2} - \nu_{0} = -8.$$
    Setting $\mu_2 = -\mu_1 = -7$ gives us $(2,1,0)$ and $(2,2,2)$ assigned to $\mathcal{T}_2$ and $(2,0,1)$ assigned to $\mathcal{T}_1$.

    Finally, we choose $\mu_0 = -1$ so that $(0,0,0), (0,1,2)$, and $(0,2,1)$ are assigned to $\mathcal{T}_2$.
\end{example}

Returning to an arbitrary finite abelian group of odd order, we choose $\nu$ and $\mu$ as above (with $\mu_0$ negative) so that we have the assignments of leaf-labellings as described above. We will show that for this choice of $\lambda$ we have $\rank A_\lambda \geq 5\ell + 1$. Our strategy is to perform row and column operations on $A_\lambda$ in order to turn it into a block upper triangular matrix without changing the rank. 

We will denote the row of $A_\lambda$ corresponding to the parameter $w_i^g$ as $r_i^g$. Our first observation is that for any assignment of consistent leaf-labellings to $\C T_1$ and $\C T_2$ we have $r_1^g = r_5^g + r_6^g$ for all $g\in G$. The rows $r_1^g$ therefore do not contribute to the rank of $A_\lambda$, so we remove them. Note that this corresponds to the notion of a \emph{contracted semi-directed network}, which we do not introduce here (see \cite{gross2023dimensions} for further details). Next, perform column swaps so that all columns assigned to $\mathcal{T}_2$ are to the left of the columns assigned to $\mathcal{T}_1$. Observe that for all columns assigned to $\mathcal{T}_2$, the entry corresponding to $w_2^g$ is equal to the entry corresponding to $w_4^g$ for all $g \in G$ (where edge labels are as in Figure 2), and the entry corresponding to $w_5^g$ is $0$ for all $g\in G$. In terms of the tree topology, this is because in $\mathcal{T}_2$ the vertex between $e_2$ and $e_4$ has degree 2, and because $\mathcal{T}_2$ does not contain the edge $e_5$. Perform row swaps so that the corresponding edge order is $e_2, e_3, e_6, e_4, e_5$ from top to bottom. Next perform the row operations
$$r_4^g \rightarrow r_4^g - r_2^g$$
for all $g\in G$. This gives a block upper-triangular matrix of the form
\begin{equation} \label{eq:block-triangular-matrix}
 A_\lambda = 
\begin{blockarray}{ccc}
& \mathcal{T}_2 & \mathcal{T}_1 \\ [2pt]
\begin{block}{c(c|c)}
  r_2 & \qquad\qquad  & \qquad  \\ [0pt]
  r_3 &  A \qquad & * \\ [0pt]
  r_6 &  \quad\qquad &  \qquad \\ [0pt]
  \cline{2-3}
  && \\ [\dimexpr-\normalbaselineskip+10pt]
  r_4 - r_2 &  0 \qquad & B  \\ [0pt]
  r_5 & \qquad\qquad &  \\ [3pt]
\end{block}
\end{blockarray} \ .
 \end{equation}
 
\noindent It follows that $\rank A_\lambda \geq \rank A + \rank B$. The submatrix $A$ consists of all columns assigned to $\mathcal{T}_2$ and rows corresponding to $w_2^g, w_3^g,$ and $w_6^g$ for all $g \in G$. These are all the variables associated to edges in $\mathcal{T}_2$ that form a 3-star tree, and so we know $\rank A \leq \dim \mathcal{T}_2 = 3\ell + 1$. In Lemma \ref{lem:submatrixA}, we show that $\rank A = \dim \mathcal{T}_2 = 3\ell + 1$.  The submatrix $B$ consists of all columns assigned to $\mathcal{T}_1$ and rows $r_4^g - r_2^g$ and $r_5^g$ for all $g \in G$. The columns of $B$ are given by the consistent leaf-labellings assigned to $\mathcal{T}_1$ which are
$$ S = S_1\cup S_2 = \{(g,h,-g-h)\ |\ g\in X, h\neq -g\}\cup\{(-g,0,g)\ |\ g\in X\},$$
so $|S_1| = 2t(t) = \frac{\ell}{2}\ell$ and $|S_2|=t = \frac{\ell}{2}$.  In the next example, we illustrate the block triangular matrix in \eqref{eq:block-triangular-matrix} for $G = \mathbb{Z}/{3 \mathbb{Z}}$.
\begin{example}
    Here we continue Example \ref{exm:Z3}. With $G = \mathbb{Z}/3\mathbb{Z}$ and assignments of consistent leaf-labellings as in the example, after all row operations we have the following matrix.
     \[ A_\lambda = 
\begin{blockarray}{cccccccccc}
& \mathcal{T}_2 & \mathcal{T}_2 & \mathcal{T}_2 & \mathcal{T}_2 & \mathcal{T}_2 & \mathcal{T}_2 & \mathcal{T}_1 & \mathcal{T}_1 & \mathcal{T}_1 \\ [2pt]
& 000 & 012 & 021 & 120 & 210 & 222 & 111 & 102 & 201 \\ [2pt]
\begin{block}{c(cccccc|ccc)}
  w_2^0 & 1 & 0 & 0 & 0 & 0 & 0 & 0 & 1 & 1 \\ [3pt]
  w_2^1 & 0 & 1 & 0 & 0 & 1 & 0 & 1 & 0 & 0 \\ [3pt]
  w_2^2 & 0 & 0 & 1 & 1 & 0 & 1 & 0 & 0 & 0 \\ [3pt]
  w_3^0 & 1 & 0 & 0 & 1 & 1 & 0 & 0 & 0 & 0  \\ [3pt]
  w_3^1 & 0 & 0 & 1 & 0 & 0 & 0 & 1 & 0 & 1  \\ [3pt]
  w_3^2 & 0 & 1 & 0 & 0 & 0 & 1 & 0 & 1 & 0  \\ [3pt]
  w_6^0 & 1 & 1 & 1 & 0 & 0 & 0 & 0 & 0 & 0  \\ [3pt]
  w_6^1 & 0 & 0 & 0 & 1 & 0 & 0 & 0 & 0 & 0  \\ [3pt]
  w_6^2 & 0 & 0 & 0 & 0 & 1 & 1 & 0 & 0 & 0  \\ [2pt]
  \cline{2-10}
  &&&&&&&&& \\ [\dimexpr-\normalbaselineskip+3pt]
  w_4^0 & 0 & 0 & 0 & 0 & 0 & 0 & 0 & -1 & -1  \\ [3pt]
  w_4^1 & 0 & 0 & 0 & 0 & 0 & 0 & -1 & 1 & 0  \\ [3pt]
  w_4^2 & 0 & 0 & 0 & 0 & 0 & 0 & 1 & 0 & 1  \\ [3pt]
  w_5^0 & 0 & 0 & 0 & 0 & 0 & 0 & 0 & 0 & 0  \\ [3pt]
  w_5^1 & 0 & 0 & 0 & 0 & 0 & 0 & 1 & 1 & 0  \\ [3pt]
  w_5^2 & 0 & 0 & 0 & 0 & 0 & 0 & 0 & 0 & 1  \\ [3pt]
\end{block}
\end{blockarray}
 \]

 Note that in this case the dimension of the space containing the variety is $(\ell + 1)^2 = 9$, which is less than the expected dimension of $5\ell +1 = 11$. Therefore the expected dimension cannot be reached, and indeed, explicit computation shows that the variety has dimension $9$. Here we can see explicitly that the submatrices $A$ and $B$ have rank strictly less than as described in Lemmas \ref{lem:submatrixA} and \ref{lem:submatrixB}.
\end{example}

In the next two lemmas, we give the rank of the submatrix $A$ and a lower bound on the rank of the submatrix $B$.

\begin{lem}\label{lem:submatrixA}
Let $G$ be an abelian group with $|G|=\ell+1$ odd and $|G| \geq 5$, and $A_\lambda$ as above in \eqref{eq:block-triangular-matrix}. Then $\rank A = 3\ell + 1$.
\end{lem}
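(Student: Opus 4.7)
The upper bound $\rank A \leq 3\ell + 1$ is immediate: every column of $A$ is a column of the exponent matrix of the 3-star tree $\C T_2$ (with pendant variables $w_2, w_3, w_6$), whose rank is $\dim \C T_2 = 3\ell + 1$. To establish the matching lower bound, I plan to show that the left kernel of $A$ has dimension exactly $2$, which gives $\rank A = 3(\ell+1) - 2 = 3\ell + 1$.

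Parameterize a left-kernel vector by three functions $\alpha, \beta, \gamma \colon G \to \mathbb{C}$ indexing the rows $w_6^{\cdot}, w_2^{\cdot}, w_3^{\cdot}$, so the condition reads $\alpha(g_1) + \beta(g_2) + \gamma(g_3) = 0$ for every column $(g_1,g_2,g_3)$ assigned to $\C T_2$. The plan is to process the three families of columns in turn. The family $(0, a, -a)$ for $a \in G$ yields $\gamma(k) = -\alpha(0) - \beta(-k)$. Substituting into the equations for $(b, -b, 0)$ with $b \in X$ gives $\alpha(b) = \alpha(0) + \beta(0) - \beta(-b)$. Substituting into the equations for $(-b, k, b - k)$ with $b \in X$ and $k \neq 0$ gives
\[
\alpha(-b) = \alpha(0) + \beta(k - b) - \beta(k),
\]
and since the left-hand side is independent of $k$, this forces $\beta(k-b) - \beta(k)$ to equal a constant $c_b$ on $G \setminus \{0\}$, for each $b \in X$.

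At this point the remaining freedom is the scalar $\alpha(0)$ together with a function $\beta$ satisfying the translation constraints $\beta(k-b) - \beta(k) = c_b$ on $G \setminus \{0\}$ for all $b \in X$. The main obstacle will be showing that these constraints force $\beta$ to be constant and each $c_b$ to vanish; the kernel will then have dimension $2$ (parameters $\alpha(0)$ and the constant value of $\beta$), finishing the proof. The plan for this step is a walk argument in $G$: rewriting the relation as $\beta(k + b) = \beta(k) - c_b$ for $k \neq -b$, one may step by any element of $\pm X = G \setminus \{0\}$ provided one avoids a single forbidden position determined by the direction of the step. Because $|X| = t \geq 2$ when $|G| \geq 5$, closed walks that begin and end at $0$ but bypass $0$ internally exist, and each such walk $0 \to b_1 \to b_1 + b_2 \to \cdots \to 0$ yields a linear relation $\sum_i \pm c_{b_i} = 0$. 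Producing sufficiently many such walks---leveraging that $|G|$ is odd so that every element has odd order---will give a system of equations of full rank $|X|$ in the unknowns $\{c_b\}_{b \in X}$, forcing them all to vanish. Once this is done, evaluating the original constraint at $k = b$ yields $\beta(0) = \beta(b)$, and iterating through directions in $X$ shows $\beta$ is constant on all of $G$.
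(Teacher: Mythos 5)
Your reduction is correct and is genuinely different from the paper's argument: the paper works with the column space, showing that every column of the full $3$-star exponent matrix $K_3$ indexed by a leaf-labelling assigned to $\mathcal{T}_1$ lies in the span of the columns assigned to $\mathcal{T}_2$ (via explicit cubic binomial relations of the $3$-star toric ideal), whereas you work with the left kernel and aim to pin its dimension at $2$. The bookkeeping up to the translation constraints is right: the families $(0,a,-a)$, $(b,-b,0)$ and $(-b,k,b-k)$ do yield $\gamma(k)=-\alpha(0)-\beta(-k)$, $\alpha(b)=\alpha(0)+\beta(0)-\beta(-b)$, and $\beta(k-b)-\beta(k)=c_b$ for all $k\neq 0$, and if $\beta$ is forced to be constant the kernel is exactly $2$-dimensional, giving $\operatorname{rank} A = 3(\ell+1)-2 = 3\ell+1$.

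The gap is that the decisive step --- that the realizable closed walks give a system of full rank $|X|$ in the $c_b$ --- is asserted rather than proved, and this is where essentially all of the content of the lemma lives (it plays the role of the paper's explicit column relations). The claim is true, but it is not automatic: each relation $\sum_i \pm c_{b_i}=0$ you can extract corresponds to a closed walk in the complete graph on $G$ with the $t$ edges $\{0,-b\}$, $b\in X$, deleted, and some of the most natural small walks are blocked. For instance, the triangle $0\to a\to 2a\to 0$, which would give $c_{2a}=2c_a$, uses the edge $\{0,2a\}$, which is forbidden precisely when $2a\in -X$; so even the ``obvious'' generating relations require case analysis comparable to the paper's (which also splits on whether various elements lie in $X$). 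To close the gap you must either exhibit explicit admissible walks whose relation vectors span a rank-$t$ sublattice of $\ker(\mathbb{Z}^X\to G)$, or argue differently. A clean way to finish without hunting for walks: set $\delta_a(j)=\beta(j+a)-\beta(j)$; summing over all $j\in G$ gives $\delta_a$ equal to $-c_a$ off a single point and $(n-1)c_a$ at that point ($n=|G|$), and the telescoping identity $\delta_{2a}(j)=\delta_a(j)+\delta_a(j+a)$ evaluated at $j=-a$ versus a generic $j$ forces $nc_a=0$, hence $c_a=0$ for every $a$ and $\beta$ constant; this uses $n$ odd (so $2a\neq 0$) and $n\geq 5$ (so a generic $j$ exists). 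As written, however, the proposal defers exactly the step that needs proof.
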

\begin{proof}
    The submatrix $A$ consists of columns corresponding to consistent leaf-labellings assigned to $\mathcal{T}_2$, and rows corresponding to the variables $w_2^g, w_3^g$, and $w_6^g$ for all $g\in G$. This submatrix also appears in the corresponding matrix of exponents for the $3$-star tree with edges $e_2, e_3$, and $e_6$ as in \Cref{fig:3star} (possibly after performing some column swaps). 

    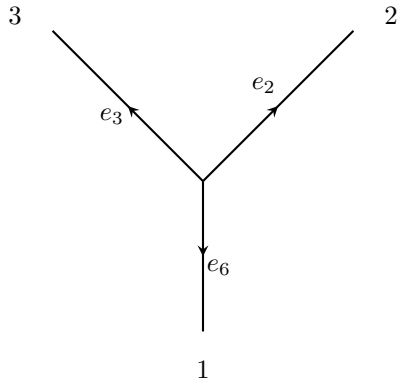
\begin{figure}[h!]
        \centering
        \begin{tikzpicture}
        [every node/.style={inner sep=0pt},
                            every path/.style={thick},   
        decoration={markings, 
            mark= at position 0.5 with {\arrow{stealth}}
            }
        ]
        \draw[postaction={decorate}] (0,0) -- (-2,2) node[midway,below=4, left=1] {$e_{3}$};
        \draw[postaction={decorate}] (0,0)  -- (2 ,2)  node[midway,above=8,left=0.5] {$e_{2}$};
        \draw[postaction={decorate}] (0,0) -- (0,-2) node[midway,below=4, right=1] {$e_{6}$};
        
        \node (v3) at (0,-2.5) {$1$};
        \node (v3) at (2.5,2.2) {$2$};
        \node (v3) at (-2.5,2.2) {$3$};
        
        \end{tikzpicture}
        \caption{A 3-star tree with taxa labels and edge labels}
        \label{fig:3star}
    \end{figure}

    Denote by $K_3$ the matrix of exponents corresponding to the general group-based model of $G$ on the $3$-star tree. The dimension of the variety corresponding to this model is $3\ell +1$ \cite[Lemma~4.1]{gross2023dimensions}. Since the parameterization of this model is monomial, $K_3$ has rank equal to the dimension of the model.
     
    Let $K_{(g_1, g_2, g_3)}$ be the column of $K_3$ corresponding to the consistent leaf-labelling $(g_1, g_2, g_3)$, so that we have
    $$K_{(g_1, g_2, g_3)} = E_6^{g_1} + E_2^{g_2} + E_3^{g_3}.$$
    We will show that the columns of $K_3$ corresponding to consistent leaf-labellings that we have assigned to $\mathcal{T}_1$ can be written as linear combinations of columns corresponding to consistent leaf-labellings we have assigned to $\mathcal{T}_2$, thereby showing that $\rank A = \rank K_3 = 3\ell + 1$.

    First, consider the consistent leaf-labelling $(-g,0,g)$. The reader can check that for any $h\in X$ with $h\neq g$ we have
    $$K_{(-g,0,g)} = K_{(0,0,0)} + K_{(-g, h, g-h)} + K_{(-h,h-g,g)} - K_{(-h,h,0)} - K_{(0,h-g,g-h)},$$
    and all terms on the right hand side are from consistent leaf-labellings that are assigned to $\mathcal{T}_2$. Note that since $|G|\geq 5$, at least one such $h$ exists.
    
    Next consider the consistent leaf-labelling $(g, h, -g-h)$ for $g \in X$ and $h\neq -g$. We consider two cases separately:\\
    
    \emph{Case 1:} $h \not\in X$.
    For this case we may write
    \begin{equation}\label{eq:linear-dependence}
    K_{(g, h, -g-h)} = K_{(g, -g, 0)} + K_{(-g, h, g-h)} + K_{(h, g , -g-h)} - K_{(-g, g, 0)} - K_{(h, -g, g-h)}.
    \end{equation}
    
    \emph{Case 2:} $h \in X$.
    We break this down into two further cases. First, suppose $-g-h \in X$. Then we have
    \begin{equation}\label{eqn:Ks}
        K_{(g, h, -g-h)} = K_{(g, -g, 0)} + K_{(0, h, -h)} + K_{(g+h, 0 , -g-h)} - K_{(0, 0, 0)} - K_{(g+h, -g, -h)},
    \end{equation}
    where we are using that although $(g+h, 0 , -g-h)$ is assigned to $\C T_1$, the column $K_{(g+h, 0 , -g-h)}$ is linearly dependent on columns corresponding to leaf-labellings assigned to $\mathcal{T}_2$ by the first part of the proof. Therefore we can substitute the relation from \eqref{eq:linear-dependence} into \eqref{eqn:Ks} to obtain $K_{(g, h, -g-h)}$ as a linear combination of columns assigned to $\C T_2$.  On the other hand, if $-g-h\not\in X$ then
    $$K_{(g, h, -g-h)} = K_{(g, -g, 0)} + K_{(-g-h, h, g)} + K_{(0, g+h , -g-h)} - K_{(0, -g, g)} - K_{(-g-h, g+h, 0)}.$$  

\end{proof}

\begin{remark}
    Observe that the relations used in the proof above correspond to the cubic binomials in the ideal for the 3-star tree, as described in \cite{sturmfels2005toric}.
    \end{remark}

\begin{lem}\label{lem:submatrixB}
Let $G$ be a finite abelian group with $|G|=\ell+1$ odd and $|G|\geq 7$, and $A_\lambda$ as above in \eqref{eq:block-triangular-matrix}. Then $\rank B \geq 2\ell$.
\end{lem}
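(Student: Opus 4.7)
My plan is to analyze $B$ by partitioning its rows into two groups and computing each group's contribution to the row space. Let $B_4$ denote the submatrix of $B$ consisting of the rows $r_4^g - r_2^g$ for $g\in G$, let $B_5$ denote the submatrix consisting of the rows $r_5^g$ for $g\in G$, and let $W_4, W_5$ be their respective row spans. Since $\rank B = \dim(W_4+W_5)$, it suffices to prove $\dim W_4 = \dim W_5 = 2t$ and $W_4 \cap W_5 = 0$, which gives $\rank B = 4t = 2\ell$. First I would tabulate the nonzero entries of each column of $B$: an $S_1$ column $(g,h,-g-h)$ has $+1$ in row $r_4^{g+h}-r_2^{g+h}$, $-1$ in row $r_4^{h}-r_2^{h}$, and $+1$ in row $r_5^g$, while an $S_2$ column $(-g,0,g)$ has $+1$ in row $r_4^{-g}-r_2^{-g}$, $-1$ in row $r_4^{0}-r_2^{0}$, and $+1$ in row $r_5^{-g}$.

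From this tabulation, $\dim W_5 = 2t$ is essentially immediate: the row $r_5^0$ is zero, while the $2t$ rows $\{r_5^g, r_5^{-g} : g \in X\}$ have pairwise disjoint supports and so are linearly independent. For $W_4$, a column-by-column check shows $\sum_{f\in G}(r_4^f - r_2^f) = 0$, so $\dim W_4 \leq 2t$. To prove this is the only relation, I would suppose $\sum_f c_f(r_4^f - r_2^f) = 0$ and read off each column: the $S_1$ columns give $c_{g+h} = c_h$ for all $g\in X$ and $h\neq -g$, and the $S_2$ columns give $c_{-g} = c_0$ for all $g\in X$. Setting $h=0$ in the first relation yields $c_g=c_0$ for $g\in X$, and combined with $c_{-g}=c_0$ and the identity $X\cup\{0\}\cup(-X)=G$, this forces $c$ to be constant, giving $\dim W_4 = 2t$.

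The hard part will be showing $W_4 \cap W_5 = 0$. A vector $v$ in the intersection yields coefficients $\alpha_g, \beta_g$ on the $W_5$ side and $c_f$ on the $W_4$ side satisfying the inhomogeneous relations $c_{g+h} - c_h = \alpha_g$ for $g\in X, h\neq -g$, and $c_{-g} - c_0 = \beta_g$ for $g\in X$. Unlike in the previous step, $\alpha_g$ is not automatically zero, and forcing its vanishing is the crux. I will fix $g\in X$ of order $d\geq 3$ (since $-g\notin X$ forces $2g\neq 0$) and split into two cases. If $\langle g\rangle \neq G$, I pick any $h_0 \notin \langle g\rangle$; the orbit $h_0+\langle g\rangle$ is disjoint from the excluded index $-g \in \langle g\rangle$, so the step relation iterates around the full $d$-cycle to yield $0 = c_{h_0+dg}-c_{h_0} = d\alpha_g$, hence $\alpha_g=0$. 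In the cyclic case $\langle g\rangle = G$, I pick any $g'\in X\setminus\{g\}$ (possible since $|X|=t\geq 3$), write $g'=jg$ with $j\in\{1,\dots,d-1\}$, iterate the $g$-relation over $h=0,g,\dots,(d-2)g$ to obtain $c_{kg} = c_0 + k\alpha_g$ for $k=0,\dots,d-1$, and then compute $\alpha_{g'}$ in two ways: the $g'$-relation at $h=0$ gives $\alpha_{g'} = j\alpha_g$, while at $h=-g$ (valid since $g'\neq g$) it gives $\alpha_{g'} = (j-1)\alpha_g - (d-1)\alpha_g = (j-d)\alpha_g$, forcing $d\alpha_g = 0$. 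Either way $\alpha_g = 0$, whereupon $\beta_g = (d-1)\alpha_g = 0$ as well, so $v=0$.

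Combining the three steps gives $\rank B = \dim W_4 + \dim W_5 - \dim(W_4\cap W_5) = 4t = 2\ell$, which is actually equality rather than just the required inequality. The subtle point is the cyclic case of the intersection argument, where a second element $g'$ of $X$ is essential in order to ``close the cycle'' across the missing index $h=-g$; this is precisely where the hypothesis $|X|\geq 2$ (comfortably guaranteed by $|G|\geq 7$) enters.
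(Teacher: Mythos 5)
Your proof is correct, but it takes a genuinely different route from the paper's. The paper works with \emph{columns}: it exhibits an explicit set of $2\ell$ columns of $B$ --- the $t$ columns $B_{(-g,0,g)}$ from $S_2$, plus for each $g\in X$ the three columns indexed by $(g,0,-g)$, $(g,g,-2g)$ and $(g,h_g,-g-h_g)$, where the elements $h_g$ are supplied by the auxiliary combinatorial Lemma~\ref{lem:hg} --- and proves their independence by a lengthy case analysis showing that the three-dimensional spans $V_g$ intersect pairwise trivially inside $W=\mathrm{span}\{E_4^h\}$. You instead work with \emph{rows}, writing the row space as $W_4+W_5$: the $r_5$ rows restricted to the columns of $B$ have pairwise disjoint supports (so $\dim W_5=2t$ at a glance), the rows $r_4^f-r_2^f$ admit only the all-ones relation because the relations $c_{g+h}=c_h$ (at $h=0$) and $c_{-g}=c_0$ together with $X\cup\{0\}\cup(-X)=G$ force $c$ constant (so $\dim W_4=2t$), and a vector in $W_4\cap W_5$ produces the inhomogeneous conditions $c_{g+h}-c_h=\alpha_g$, which you kill by telescoping around a $\langle g\rangle$-orbit --- around a coset avoiding the forbidden index $-g$ when $\langle g\rangle\neq G$, and by closing the cycle with a second element $g'\in X$ when $G=\langle g\rangle$. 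I checked the column tabulation, both dimension counts, and both cases of the intersection argument; all steps are sound, and the needed non-degeneracies ($g+h\neq h$, $0\neq -g'$, $-g\neq -g'$, $d\geq 3$) all hold. Your argument buys several things: it avoids Lemma~\ref{lem:hg} and the $V_g\cap V_{g'}$ casework entirely, it determines the exact value $\rank B=2\ell$ rather than only the lower bound, and it needs only $|X|\geq 2$ rather than $|X|\geq 3$, so it also covers $|G|=5$, a case the paper settles by computation. The paper's approach, in exchange, produces an explicit independent set of columns, which is the more constructive certificate.
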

\begin{proof}
The columns of $B$ correspond to the consistent leaf-labellings we assign to $\C T_1$, that is, those in the set $S$, where
$$ S = S_1\cup S_2 = \{(g,h,-g-h)\ |\ g\in X, h\neq -g\}\cup\{(-g,0,g)\ |\ g\in X\}.$$
The rows of $B$ correspond to the variables $w_5^g$ and $w_4^g$, where we have performed the row operation $r_4^g \rightarrow r_4^g - r_2^g$, for all $g\in G$. For the leaf-labelling $(g_1, g_2, g_3)$, we have $m_1(g_1,g_2,g_3) = w_1^{g_1}w_2^{g_2}w_3^{g_3}w_4^{g_1 + g_2}w_5^{g_1}$. Thus, the column of $B$ corresponding to that leaf labelling is the vector in $\mathbb{C}^{2\ell+2}$ given by
$$B_{(g_1, g_2, g_3)} = E_5^{g_1} + E_4^{g_1 + g_2} - E_4^{g_2}.$$
To prove the result, it is sufficient to find a subset $\C{B}\subset S$ of $2\ell$ linearly independent columns.

First consider the columns corresponding to consistent leaf-labellings in $S_2$. Here each column is given by
$$B_{(-g,0,g)} = E_5^{-g} + E_4^{-g} - E_4^0,$$
for $g\in X$. Since $-g\not\in X$, the column $B_{(-g,0,g)}$ is the only column of $B$ with a non-zero component in the basis vector $E_5^{-g}$, and is therefore linearly independent of all other columns.

Next we consider columns coming from consistent leaf-labellings in $S_1$. For each $g\in X$ consider the leaf-labellings $(g,0,-g),(g,g,-2g),$ and $(g, h_g, -g-h_g)$, where for each $g\in X$ the element $h_g$ is chosen from $G\setminus(X\cup\{0\})$ with the conditions that $h_g \neq -g, 2g$, and if $g\neq g'$ then $h_g\neq h_{g'}$ (this is possible by Lemma \ref{lem:hg}). We claim that the set 
$$\C{B}=\{B_{(-g,0,g)}\ |\ g\in X\}\cup\{B_{(g,0,-g)},B_{(g,g,-2g)},B_{(g, h_g, -g-h_g)}\ |\ g\in X\}$$
is linearly independent. Let
$$V_g = \text{span}_{\mathbb{C}}\{B_{(g,0,-g)}, B_{(g,g,-2g)}, B_{(g, h_g, -g-h_g)}\}.$$
By the choice of $h_g$, $\dim V_g = 3$ for all $g \in X$.  Thus, to prove the claim, it is sufficient to show that $V_g\cap V_{g'}=\{0\}$ for all $g, g' \in X$ with $g\neq g'$.

First, by considering the basis vectors $E_5^g$ and $E_5^{g'}$, we must have that $V_g\cap V_{g'} \subset W$ where $W = \text{span}_{\mathbb{C}}\{E_4^h\ |\ h \in G\}$, so that $V_g\cap V_{g'} = (V_g\cap W)\cap(V_{g'}\cap W)$. Now $V_g\cap W$ is spanned by the vectors $B_{(g,0, -g)} - B_{(g,g,-2g)}$ and $B_{(g,0, -g)} - B_{(g,h_g,-g-h_g)}$, where
\begin{align*}
B_{(g,0, -g)} - B_{(g,g,-2g)} &= 2E_4^{g} - E_4^{2g} - E_4^0, \\
B_{(g,0, -g)} - B_{(g,h_g,-g-h_g)} &= E_4^{g} - E_4^{g+h_g} + E_4^{h_g} - E_4^0.
\end{align*}
An arbitrary element of $v\in V_g\cap W$ is then given by 
\begin{align*}
   v &= \alpha(B_{(g,0, -g)} - B_{(g,g,-2g)}) + \beta(B_{(g,0, -g)} - B_{(g,h_g,-g-h_g)}) \\
     &= (2\alpha + \beta)E_4^g -\alpha E_4^{2g} + \beta E_4^{h_g} - \beta E_4^{g+h_g} -(\alpha +\beta)E_4^0,
\end{align*}
for $\alpha, \beta \in \mathbb{C}$. Observe that from the choice of $h_g$, the basis vectors appearing in this expression are distinct. Now suppose that for $g'\in X$ with $g\neq g'$, we have another such element $v'\in V_{g'}\cap W$ with coefficients $\alpha'$ and $\beta'$. We will show that if $v=v'$ then $v = v' = 0$ so that $(V_g\cap W)\cap(V_{g'}\cap W) = \{0\}$ as claimed. We have
\begin{align}\label{eqn:sim}
    v &= (2\alpha + \beta)E_4^g -\alpha E_4^{2g} + \beta E_4^{h_g} - \beta E_4^{g+h_g} -(\alpha +\beta)E_4^0, \\
    v' &= (2\alpha' + \beta')E_4^{g'} -\alpha' E_4^{2g'} + \beta' E_4^{h_{g'}} - \beta' E_4^{g'+h_{g'}} -(\alpha' +\beta')E_4^0.
\end{align}
Now suppose $v=v'$. By examining the coefficients of $E_4^0$ we must have 
\begin{equation}\label{eqn:alphabeta}
\alpha + \beta = \alpha' + \beta'.
\end{equation}
Next consider the coefficient of $E_4^{g'}$. Either $E_4^{g'}$ appears in the expression for $v$ or $2\alpha' + \beta' = 0$. For the former, since $g'\neq g, 0$ and $g' \in X$ so $g' \neq h_g$, we must have either $g'=2g$ or $g' = g + h_g$. We consider these three cases separately.\\

\emph{ Case 1:} $g'=2g$.
By equating coefficients of $E_4^{g'} = E_4^{2g}$ we have $2\alpha' + \beta' = -\alpha$, and substituting in to equation (\ref{eqn:alphabeta}) gives $\beta = 3\alpha' + 2\beta'$. Next consider the coefficient of $E_4^{h_{g'}}$. Either this is zero (i.e. $\beta' = 0)$ or $E_4^{h_{g'}}$ appears in the expression for $v$. 

If $\beta'\neq 0$ then since $h_{g'} \neq g, h_g, 0, $ or $g'=2g$, we must have $h_{g'} = g + h_g,$ and therefore $\beta' = -\beta$, from which it follows by \Cref{eqn:alphabeta} that $\alpha + 2\beta = \alpha'$. Now since the coefficient of $E_4^{g'+h_{g'}}$ is not zero we must have either $g'+h_{g'} = g$ or $h_g$. If $g'+h_{g'} = h_g$ then substituting in $h_{g'} = g + h_g$ gives $g' + g +h_g = h_g$ and therefore $g = -g'$, a contradiction. We therefore must have $g'+h_{g'} = g$ so then $-\beta' = 2\alpha + \beta$, i.e., $\alpha = 0$. But now we have simultaneous equations $2\beta = \alpha'$ and $3\beta = \alpha'$, so $\alpha' = \beta = 0$.

Now, if $\beta' = 0$ then we have $\alpha = -2\alpha'$ and $\beta = 3\alpha'$. Substituting these values into equation (\ref{eqn:sim}) and setting $v=v'$ gives
$$-\alpha'E_4^g + 2\alpha'E_4^{2g} + 3\alpha'E_4^{h_g} - 3\alpha'E_4^{g+h_g} = 2\alpha'E_4^{g'} - \alpha'E_4^{2g'},$$
so we conclude that $\alpha' = 0$, and therefore $v=v' = 0$.

 \emph{Case 2: }$g'=g + h_g$.
By equating coefficients of $E_4^{g'} = E_4^{g+h_g}$ we have $2\alpha' + \beta' = -\beta$. Next consider the coefficient of $E_4^{h_{g'}}$. As before, either this is zero or $E_4^{h_{g'}}$ appears in the expression for $v$. 

If $\beta' \neq 0$ then we must have $h_{g'} = 2g$, so that $\beta' = -\alpha$ and therefore $\beta = \alpha' + 2\beta'$ by \Cref{eqn:alphabeta}, and then $3\alpha' + 3\beta' = 0$. Now consider the coefficient of $E_4^{g' + h_{g'}}$. Either $g' + h_{g'} = g$ or $g'+h_{g'} = h_g$. If $g' + h_{g'} = g$ then $g+h_g + h_{g'} = g$ so $h_g$ = $-h_{g'}$, but this contradicts how $h_g$ and $h_{g'}$ were chosen; both $h_g$ and $h_{g'} \in G\setminus (X\cup \{0\})$, but then by definition of $X$ we must have $-h_g$ and $-h_{g'}\in X$. Thus we must have $g'+h_{g'} = h_g$, and so $\beta = -\beta'$. Now we have $\beta' = -\alpha = -\beta$ so by equation \ref{eqn:alphabeta} we have $0=\alpha' + 3\beta'$. Solving this simultaneously with $3\alpha' + 3\beta' = 0$ gives $v = v' = 0$.

If $\beta' = 0$ we have $2\alpha' = -\beta$ and so by \Cref{eqn:alphabeta} $\alpha = 3\alpha'$. Substituting into equation (\ref{eqn:sim}) and setting $v=v'$ gives us $\alpha' = 0$ as in case 1, and therefore $v = v' = 0$.

\emph{Case 3:} $2\alpha' + \beta' = 0$. 
In this case we have $\beta' = -2\alpha'$ so $\alpha' = -(\alpha + \beta)$. This gives us
$$v' = (\alpha + \beta)E_4^{2g'} + 2(\alpha + \beta)E_4^{h_{g'}} - 2(\alpha + \beta)E_4^{g'+h_{g'}} - (\alpha + \beta)E_4^0.$$
In particular, if $v'\neq 0$ then $\alpha + \beta \neq 0$ so that $v'$ has 4 linearly independent non-zero terms. We have
$$v = (2\alpha + \beta)E_4^g - \alpha E_4^{2g} + \beta E_4^{h_g} - \beta E_4^{g+h_g} - (\alpha + \beta)E_4^0,$$
so if $v=v'$ we must have exactly four non-zero terms in the expression for $v$. This is only possible if $\alpha =0$ or $2\alpha + \beta = 0$. If $\alpha = 0$ then we have
\begin{align*}
    v &= \beta E_4^{g} + \beta E_4^{h_g} - \beta E_4^{g + h_g} - \beta E_4^0, \\
    v' &= \beta E_4^{2g'} +2\beta E_4^{h_{g'}} - 2\beta E_4^{g' + h_{g'}} - \beta E_4^0.
\end{align*}
If $v=v'$ then by equating coefficients we must have $\beta = 0$ and thus $v = v' = 0$. On the other hand, if $2\alpha + \beta = 0$ we get
\begin{align*}
    v &= -\alpha E_4^{2g} - 2\alpha E_4^{h_g} + 2\alpha E_4^{g + h_g} + \alpha E_4^0, \\
    v' &= -\alpha E_4^{2g'} -2\alpha E_4^{h_{g'}} + 2\alpha E_4^{g' + h_{g'}} + \alpha E_4^0.
\end{align*}
Examining the coefficient of $E_4^{2g}$ we see that since $g\neq g'$ and $g \neq 0$ we must have $-\alpha = \pm 2\alpha$, so $\alpha = 0$ and $v = v' = 0$.
\end{proof}

Putting together the results of this section and \cite[Proposition~4.2]{gross2023dimensions}, we get Theorem \ref{thm:main}, which we restate here for convenience. By direct calculation (see \Cref{tab:rank-samples-5}), we have that for $G=\mathbb{Z}/5\mathbb{Z}$, the dimension of $V^G_\mathcal{N}$ is 21, the expected dimension, so we include this result in the statement.

\begingroup
\def\thetheorem{\ref{thm:main}}
\begin{theorem}
Let $G$ be a finite abelian group of odd order $\ell+1 \geq 5$, and let $\mathcal{N}$ be the 3-sunlet network under the general group-based model given by $G$, with corresponding phylogenetic network variety $V_\mathcal{N}^G$. Then the affine dimension of $V_\mathcal{N}^G$ is given by 
$$\dim V_\mathcal{N}^G = 5\ell + 1.$$
\end{theorem}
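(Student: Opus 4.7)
The plan is to combine the upper bound $\dim V_\mathcal{N}^G \leq 5\ell + 1$ supplied by Lemma \ref{lem:upper} with a matching lower bound obtained through the tropicalization framework. Since the lower bound $\dim V_\mathcal{N}^G \geq \max_\lambda \rank A_\lambda$ from \cite{DRAISMA2008349} is already in hand, it suffices to exhibit one weight vector $\lambda \in \mathbb{R}^{6(\ell+1)}$ whose tropical Jacobian satisfies $\rank A_\lambda \geq 5\ell + 1$. For the boundary case $|G| = 5$, I would simply invoke the direct computation recorded in \Cref{tab:rank-samples-5}, which already produces such a $\lambda$ with rank $21 = 5(4) + 1$, so all subsequent effort is devoted to the cases $|G| \geq 7$.

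For $|G| = 2t+1 \geq 7$, the plan is to construct $\lambda$ explicitly as laid out earlier in this section: choose a subset $X \subset G \setminus \{0\}$ of size $t$ with the property that $g \in X$ implies $-g \notin X$, then pick $\nu_g = \lambda_4^g$ and the differences $\mu_g = \lambda_6^g - \lambda_5^g$ so that for each $g \in X$ the leaf-labellings $(g, h, -g-h)$ with $h \neq -g$ and the single labelling $(-g, 0, g)$ are assigned to $\mathcal{T}_1$, while all remaining labellings (including every $(0, h, -h)$) are assigned to $\mathcal{T}_2$. The odd-order hypothesis enters here twice: first so that $\{g, -g\}$ partitions $G \setminus \{0\}$ into genuine two-element pairs, and second so that Lemma \ref{lem:hg} produces the injection $h: X \to G \setminus X$ avoiding the obstructions $0, -g, 2g$ required further on.

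Once $\lambda$ is fixed, the core computation is to transform $A_\lambda$ into the block upper triangular form displayed in \eqref{eq:block-triangular-matrix}. The first reduction uses the observation that $r_1^g = r_5^g + r_6^g$ holds regardless of column assignment, so the rows $r_1^g$ may be deleted without changing the rank. Permuting columns so that $\mathcal{T}_2$-assigned columns precede $\mathcal{T}_1$-assigned ones, reordering rows in the order $r_2, r_3, r_6, r_4, r_5$, and finally applying $r_4^g \mapsto r_4^g - r_2^g$ produces the block form with diagonal blocks $A$ and $B$, giving $\rank A_\lambda \geq \rank A + \rank B$. Lemma \ref{lem:submatrixA} identifies $A$ as the exponent matrix of the general group-based model on the 3-star tree and concludes $\rank A = 3\ell + 1$, while Lemma \ref{lem:submatrixB} exhibits $2\ell$ linearly independent columns of $B$ and gives $\rank B \geq 2\ell$. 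Summing the two bounds yields $\rank A_\lambda \geq 5\ell + 1$, which matches the upper bound and completes the proof.

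The most delicate step, and the one I expect to carry the bulk of the technical work, is Lemma \ref{lem:submatrixB}. Verifying that the candidate columns $\{B_{(-g,0,g)}, B_{(g,0,-g)}, B_{(g,g,-2g)}, B_{(g,h_g,-g-h_g)} \mid g \in X\}$ are linearly independent requires analyzing when the three-dimensional spans $V_g$ and $V_{g'}$ intersect non-trivially, and this amounts to a case analysis on which basis vectors $E_4^h$ can be forced to coincide across different choices of $g \in X$. The argument hinges on the avoidance properties of the function $h_g$ supplied by Lemma \ref{lem:hg}, and this is precisely where the odd-order restriction becomes essential: for even order the pairing $\{g, -g\}$ collapses on any 2-torsion element, the analogue of $X$ cannot partition $G \setminus \{0\}$ into pairs, and the chain of substitutions in the proof of Lemma \ref{lem:submatrixA} as well as the column-counting in Lemma \ref{lem:submatrixB} both fail, which is why the even-order problem remains open.
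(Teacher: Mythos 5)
Your proposal is correct and follows essentially the same route as the paper: upper bound from Lemma \ref{lem:upper}, the case $|G|=5$ handled computationally via \Cref{tab:rank-samples-5}, and for $|G|\geq 7$ the explicit weight vector built from $X$ and Lemma \ref{lem:hg}, reduced to the block form \eqref{eq:block-triangular-matrix} with ranks supplied by Lemmas \ref{lem:submatrixA} and \ref{lem:submatrixB}. Your identification of Lemma \ref{lem:submatrixB} as the technical core and of where the odd-order hypothesis enters matches the paper's own account.
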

\addtocounter{theorem}{-1}
\endgroup
\begin{proof}
    For $G=\mathbb{Z}/5\mathbb{Z}$, the result is acheived computationally by using random sampling to find a weight vector $\lambda$ for which $A_\lambda$ has rank $5\ell + 1 = 21$ (\Cref{tab:rank-samples-5}). For all other $G$, choose $\lambda$ as described in this section. Through row operations we can transform the matrix $A_\lambda$ into a block upper triangular matrix of the form
    \begin{equation*}
	A_\lambda = \left(
	\begin{array}{cc}
	A& *\\ 
	0& B
	\end{array}\right),
    \end{equation*}
    so that $\rank A_\lambda \geq \rank A + \rank B$. By Lemma \ref{lem:submatrixA}, $\rank A = 3\ell + 1$, and, by Lemma \ref{lem:submatrixB},  $\rank B \geq 2\ell$. Thus, we obtain $\rank A_\lambda \geq 5\ell + 1$, and so the affine dimension of $V_\mathcal{N}^G$ is at least $5\ell + 1$. On the other hand, Lemma \ref{lem:upper} says that $\dim V_\mathcal{N}^G$ is at most $5\ell + 1$.  
\end{proof}
As discussed at the beginning of this work, the $3$-sunlet was the only sunlet where a dimension formula was not given in \cite{gross2023dimensions}. Since level-1 phylogenetic networks can be broken down into trees and sunlet networks, for the case when $G$ is an abelian group of odd order at least $5$ and $\mathcal{N}$ is a level-1 phylogenetic network, we can now give a full dimension result.

\begin{theorem}\label{thm:level-1}
Let $\mathcal{N}$ be a level-1 phylgenetic network with $n$ leaves, $m$ edges, and $c$ cycles. Let $G$ be a finite abelian group of odd order $\ell + 1 \geq 5$. Then the variety corresponding to $\C N$ under the general group-based model for $G$, denoted $V_{\mathcal{N}}^G$, has dimension $\ell(m-c) + 1$.
\end{theorem}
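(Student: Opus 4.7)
The plan is to derive \Cref{thm:level-1} by combining \Cref{thm:main} with the dimension formulas for trees and for $k$-sunlets with $k\geq 4$ already established in \cite{gross2023dimensions}, via the decomposition of level-1 networks into sunlets glued to trees. Since $\mathcal{N}$ is level-1, its cycles are edge-disjoint, and the network admits a canonical decomposition into $c$ sunlet subnetworks $\mathcal{S}_1,\dots,\mathcal{S}_c$ (one for each cycle, where $\mathcal{S}_i$ has a $k_i$-cycle) together with tree pieces joining them at cut vertices. In particular, every edge of $\mathcal{N}$ appears in exactly one component, and the components meet only at shared leaves/cut vertices, which after the gluing become degree-2 vertices that are suppressed.

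First I would invoke the additivity/gluing statement from \cite{gross2023dimensions} that computes the dimension of the network variety for a level-1 network from the dimensions of the varieties of its constituent subnetworks. Concretely, if $\mathcal{N}$ is obtained from $\mathcal{N}_1$ and $\mathcal{N}_2$ by identifying a leaf of one with a leaf of the other, then the corresponding parameterizations share a single edge of parameters, and the gluing lemma provides
\[
\dim V^G_{\mathcal{N}} \;=\; \dim V^G_{\mathcal{N}_1} + \dim V^G_{\mathcal{N}_2} - (\ell + 1),
\]
so that iterating this identification over all cut vertices reduces the computation of $\dim V^G_{\mathcal{N}}$ to a sum of the dimensions of the pieces minus the number of gluings times $\ell+1$.

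Next I would plug in the known dimensions: for a tree $T$ with $m_T$ edges, $\dim V^G_T = \ell m_T + 1$ by \cite[Lemma 4.1]{gross2023dimensions}; for a $k_i$-sunlet $\mathcal{S}_i$ with $k_i\geq 4$, $\dim V^G_{\mathcal{S}_i} = \ell(2k_i - 1) + 1$ by the main theorem of \cite{gross2023dimensions}; and for $k_i = 3$, $\dim V^G_{\mathcal{S}_i} = 5\ell + 1 = \ell(2k_i-1)+1$ by our \Cref{thm:main}. Thus every sunlet piece contributes $\ell(2k_i - 1) + 1$, uniformly. The bookkeeping then amounts to checking that the total number of edges across all pieces equals $m$ (each edge of $\mathcal{N}$ lies in exactly one component), that the total number of cycles is $c = \sum_i 1$, and that each cycle contributes a $-\ell$ relative to a tree with the same number of edges (since $\ell(2k_i -1)+1$ is precisely $\ell$ less than $\ell \cdot 2k_i + 1$).

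Finally, each of the gluings at the shared leaves contributes a $-(\ell+1)$ correction. Collecting the contributions, each piece contributes $\ell \cdot (\text{edges}) + 1$ with an additional $-\ell$ per cycle, and the sum of the $+1$'s from each piece cancels against the $-(\ell+1)$'s from the gluings (up to a single $+1$ for the whole connected network, by induction on the number of pieces). This yields $\ell(m - c) + 1$, as claimed. The main obstacle will be stating the gluing lemma in the exact form needed and doing the Euler-characteristic-style count of edges, pieces, and gluings correctly; once that is set up, the substitution of the three known dimension formulas is routine.
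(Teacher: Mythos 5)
Your proposal is correct and follows essentially the same route as the paper: the paper also reduces to the base cases (3-star, 3-sunlet via \Cref{thm:main}, and $k$-sunlets for $k\geq 4$ from \cite{gross2023dimensions}) and combines them using the toric fiber product dimension formula $\dim V^G_{\mathcal{N}} = \dim V^G_{\mathcal{N}_1} + \dim V^G_{\mathcal{N}_2} - (\ell+1)$, organized as an induction on the number of cut edges rather than a one-shot decomposition into all pieces. The only point to be careful about in your bookkeeping is that each cut edge is retained in both resulting subnetworks, so the edge counts satisfy $m_1 + m_2 = m + 1$ per cut (not a partition of the edges), which is exactly what makes the $+1$'s and $-(\ell+1)$'s cancel to give $\ell(m-c)+1$.
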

\begin{proof}
    Following \cite[Theorem~1.1]{gross2023dimensions}, we prove the result by induction on the number of cut edges of $\mathcal{N}$. If $\mathcal{N}$ has no cut edges, then it is either the 3-star tree or a sunlet network. If $\mathcal{N}$ is the 3-star tree, then it has no cycles and $\dim V_{\mathcal{N}}^G =3\ell + 1$ \cite[Lemma~4.1]{gross2023dimensions}. If $\mathcal{N}$ is a 3-sunlet network then it has a single cycle and 6 edges, and $\dim V_{\mathcal{N}}^G = 5\ell + 1$ by Theorem \ref{thm:main}. If $\mathcal{N}$ is an $n$-sunlet network with $n > 3$ then it has a single cycle and $2n$ edges, and $\dim V_{\mathcal{N}}^G = \ell(2n-1) + 1$ \cite[Theorem~4.7]{gross2023dimensions}. Thus in all cases the result holds.

    For the induction step, suppose that $\mathcal{N}$ is a level-1 phylogenetic network with a cut edge $e$, $m$ edges, and $c$ cycles. Let $\mathcal{N}_1$ and $\mathcal{N}_2$ be the networks obtained by cutting $\mathcal{N}$ at $e$ and observe that both $\mathcal{N}_1$ and $\mathcal{N}_2$ have strictly fewer cut edges than $\mathcal{N}$. For $i=1,2$ let $m_i$ and $c_i$ be the number of edges and cycles respectively of $\mathcal{N}_i$, then by induction we have $\dim V_{\mathcal{N}_i}^G = \ell(m_i - c_i) + 1$. Next we have that the ideal $I_{\mathcal{N}}^G$ defining $V_{\mathcal{N}_i}^G$ is given by the toric fiber product of $I_{\mathcal{N}_1}^G$ and $I_{\mathcal{N}_2}^G$ \cite[Remark~3.3]{cummings2021invariants}. Using \cite[Corollary~3.4]{gross2023dimensions} we obtain
    \begin{align*}
        \dim V_{\mathcal{N}}^G &= \dim V_{\mathcal{N}_1}^G + \dim V_{\mathcal{N}_2}^G - (\ell + 1) \\
            &= \ell(m_1 - c_1) + 1 + \ell(m_2 - c_2) + 1 -(\ell + 1) \\
            &=\ell(m_1 + m _2 - 1 - (c_1 + c_2)) + 1 \\
            &=\ell(m - c) + 1.
    \end{align*}
    
\end{proof}

The dimension results in \cite{gross2023dimensions} were used to prove identifiability statements for level-1 phylogenetic networks under group-based models. Now that we have the dimension result for 3-sunlet networks, we can remove the \lq triangle-free\rq\ restriction placed on those statements for the general-group based model for $G$, when $G$ is a finite abelian group of odd order at least 5. The proof of the following result is identical to \cite[Proposition~6.9]{gross2023dimensions}. First, recall the following two definitions. We say that two phylogenetic networks $\mathcal{N}_1$ and $\mathcal{N}_2$ are \emph{distinguishable over $G$} if $V_{\mathcal{N}_1}^G \not\subset V_{\mathcal{N}_2}^G$ and $V_{\mathcal{N}_2}^G \not\subset V_{\mathcal{N}_1}^G$. Given a level-1 phylogenetic network $\mathcal{N}$ on $n$ leaves and a subset $A$ of the leaf-set $[n]$, the \emph{network restricted to $A$}, denoted $\mathcal{N}|_A$, is the level-1 phylogenetic network obtained from $\mathcal{N}$ by removing all edges and vertices that do not lie on any path between two leaves in $A$, and suppressing any resulting vertices of degree 2 (except the root vertex). See \cite[Definition~4.1]{gross2018distinguishing} for a full definition.

\begin{proposition}\label{prop:identifiability}
Let $\mathcal{N}_1$ and $\mathcal{N}_2$ be two level-1 phylogenetic networks on $n$ leaves and both with exactly $c$ cycles. Let $G$ be a finite abelian group of odd order $\geq 5$. If there exists a subset $A\subset [n]$ such that either
\begin{enumerate}
    \item $\mathcal{N}_1|_A$ and $\mathcal{N}_2|_A$ are level-1 phylogenetic networks with distinct numbers of cycles; or
    \item $\mathcal{N}_1|_A$ is a tree and $\mathcal{N}_2|_A$ is a level-1 phylogenetic network (i.e. with at least 1 cycle); or
    \item $\mathcal{N}_1|_A$ and $\mathcal{N}_2|_A$ are distinct trees;
\end{enumerate}
then $\mathcal{N}_1$ and $\mathcal{N}_2$ are distinguishable over $G$. \qed
\end{proposition}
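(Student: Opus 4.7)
The plan is to mirror the strategy of \cite[Proposition~6.9]{gross2023dimensions}, with the single new ingredient being Theorem~\ref{thm:level-1}, which now supplies a dimension formula even when $\mathcal{N}_1|_A$ or $\mathcal{N}_2|_A$ contains a triangle. The key structural fact to invoke is that leaf-restriction is compatible with the variety construction: for any level-1 network $\mathcal{N}$ and any $A\subset[n]$, there is a coordinate projection $\pi_A$ on the $q$-variables (namely summing out the marginal entries at the dropped leaves) such that $\pi_A(V_\mathcal{N}^G) = V_{\mathcal{N}|_A}^G$. Consequently, if $V_{\mathcal{N}_1}^G \subseteq V_{\mathcal{N}_2}^G$ then $V_{\mathcal{N}_1|_A}^G \subseteq V_{\mathcal{N}_2|_A}^G$, and similarly for the reverse inclusion. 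This compatibility is exactly what is used in \cite{gross2018distinguishing, gross2023dimensions}, and we would quote it rather than re-derive it.

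Next I would argue by contradiction. Suppose $\mathcal{N}_1$ and $\mathcal{N}_2$ are not distinguishable over $G$, so, without loss of generality, $V_{\mathcal{N}_1}^G\subseteq V_{\mathcal{N}_2}^G$. Because both networks have $n$ leaves and the same number $c$ of cycles, Theorem~\ref{thm:level-1} gives $\dim V_{\mathcal{N}_1}^G = \dim V_{\mathcal{N}_2}^G$. Both varieties are irreducible (as Zariski closures of images of irreducible parameter spaces under polynomial maps), so containment of equal dimension forces equality: $V_{\mathcal{N}_1}^G = V_{\mathcal{N}_2}^G$. Applying the projection $\pi_A$ to both sides, we then obtain $V_{\mathcal{N}_1|_A}^G = V_{\mathcal{N}_2|_A}^G$. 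The task is now to contradict this equality under each of the three hypotheses.

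For cases (1) and (2), both restrictions are level-1 networks on the same leaf set $A$, but with different numbers of cycles; Theorem~\ref{thm:level-1} immediately gives $\dim V_{\mathcal{N}_1|_A}^G \neq \dim V_{\mathcal{N}_2|_A}^G$, contradicting the equality above. For case (3), the two restrictions $\mathcal{N}_1|_A$ and $\mathcal{N}_2|_A$ are distinct trees on the same leaf set, so Theorem~\ref{thm:level-1} (equivalently \cite[Lemma~4.1]{gross2023dimensions}) assigns them the same dimension $3\ell+1$-style tree formula, and dimension alone is not enough. Here we instead quote the classical identifiability of unrooted binary tree topologies under the general group-based model for $|G|\geq 3$: distinct trees define distinct (toric) varieties (see the discussion and references in \cite{gross2018distinguishing}, which in turn relies on the Fourier/Hadamard monomial parametrization of group-based tree models). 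This again contradicts $V_{\mathcal{N}_1|_A}^G = V_{\mathcal{N}_2|_A}^G$.

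The main obstacle is really case (3): the dimension technology developed in this paper does not separate distinct trees, so we are forced to appeal to the established identifiability result for group-based tree models. Cases (1) and (2) are essentially formal consequences of Theorem~\ref{thm:level-1} once the restriction-projection compatibility is in hand, and the overall proof requires no new ideas beyond those already in \cite[Proposition~6.9]{gross2023dimensions}; the novelty lies entirely in the fact that, thanks to Theorem~\ref{thm:main}, we no longer need to exclude 3-cycles from the hypotheses.
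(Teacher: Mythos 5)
Your proposal is correct and follows essentially the same route as the paper, which simply notes that the proof is identical to \cite[Proposition~6.9]{gross2023dimensions} once Theorem~\ref{thm:main} (via Theorem~\ref{thm:level-1}) supplies the dimension of networks containing triangles; your reconstruction of that argument (restriction--projection compatibility, irreducibility plus equal dimension forcing equality, dimension count for cases (1)--(2), and classical tree identifiability for case (3)) is exactly what is intended. The only point worth making explicit is that for level-1 binary networks on a fixed leaf set the quantity $m-c$ in Theorem~\ref{thm:level-1} equals $2|A|+2c-3$, so distinct cycle counts do indeed yield distinct dimensions, as you implicitly assume.
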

 
\section{Experimental Results}
\label{sec:experiments}

Theorem \ref{thm:main} confirms that for odd order groups, the general group-based model on a 3-sunlet has the expected dimension $5\ell + 1$ (except $G=\mathbb{Z}/3\mathbb{Z}$). In this section, we investigate the dimension for small finite abelian groups, and whilst an analogous construction of $\lambda$ for even order groups does not give us a maximal rank $A_\lambda$, through experiments we find that the expected dimension is obtained in all cases once $G$ is sufficiently large. The code we used to perform these calculations was written in Julia \cite{bezanson2017julia} and available to download at \texttt{https://github.com/shelbycox/3-Sunlet}.

\subsection{Sampling Methods}

We use the hyperplane description from \Cref{subsec: hyperplanes} to compute the possible matrices $A_\lambda$ (and their ranks) that can appear for the groups $\ZZ/3\ZZ$, $\ZZ/4\ZZ$, $\ZZ/2\ZZ \times \ZZ/2\ZZ$, and $\ZZ/5\ZZ$. For each of the possible $2^{\ell(\ell + 1) + 1}$ regions, we use \texttt{OSCAR} \cite{OSCAR} to test whether the region is full dimensional. We then use random sampling of points $p = (\mathbf{\mu}, \mathbf{\lambda_4})$ in $(-.5,.5)^{2(\ell + 1)}$ to obtain exactly one point in each region. For other small groups mentioned in this section, we used only random sampling with $2^{32}$ samples for each group to obtain the results. In each case, we retain at most one point for each region.

\subsection{Chamber counts for small groups} 
\label{subsec:sampling}
\Cref{tab:rank-samples-3,,tab:rank-samples-4,,tab:rank-samples-22,,tab:rank-samples-5,,tab:rank-samples-6} are the result of the computations described above for the groups $\ZZ/3\ZZ$, $\ZZ/4\ZZ$, $\ZZ/2\ZZ \times \ZZ/2\ZZ$, $\mathbb{Z}/5\mathbb{Z}$, and $\mathbb{Z}/6\mathbb{Z}$ respectively. For the first four groups listed, we confirmed that these are exactly the regions of the hyperplane arrangement. For $\ZZ/6\ZZ$, the data in the table is the result of sampling $2^{32}$ points and then retaining no more than one point per region. In addition, we include an illustration of relationships between chambers for $\ZZ/3\ZZ$ in Figure \ref{fig:Z3-poset-graph}. We make some observations.

\begin{remark}
    As observed in \Cref{lem:minimalRankChambers}, in all cases we have exactly 4 chambers of lowest rank, which is equal to the dimension of the 3-star tree, $3\ell + 1$. Two of the chambers correspond to when all leaf-labellings are assigned to either $\mathcal{T}_1$ or $\mathcal{T}_2$. Weight vectors $\lambda$ with $\lambda_5^g$ sufficiently large and $\lambda_6^g =0$ for all $g\in G$, or vice-versa, lie in these chambers. The remaining two chambers correspond to when all leaf-labellings $(0,g,-g)$ are assigned to $\mathcal{T}_1$ and all others assigned to $\mathcal{T}_2$; and when all leaf-labellings $(0,g,-g)$ are assigned to $\mathcal{T}_2$ and all others assigned to $\mathcal{T}_1$. These chambers can be reached by taking the previous weight vector $\lambda$ and swapping the values of $\lambda_5^0$ and $\lambda_6^0$.

    For $G=\mathbb{Z}/3\mathbb{Z}$, these chambers can be seen in \Cref{fig:Z3-poset-graph}, drawn in squares and highlighted in blue at the very top and very bottom of the diagram. The shortest path in the poset between $\C T_1$ and $\C T_2$ has length 7, meaning that 7 hyperplanes need to be crossed to reach one from the other. 
\end{remark}
\begin{remark}
    When $G = \ZZ/3\ZZ$, most chambers have rank 9, which is the maximum possible, because the space containing the corresponding variety is $\mathbb{C}^9$. However, for $\ZZ/4\ZZ$ the maximum rank, 16, is achieved by only $4.8\%$ of chambers. For $\ZZ/2\ZZ \times \ZZ/2\ZZ$ the maximum rank is 15, which is achieved by only $20.6\%$ of chambers. For both $\ZZ/4\ZZ$ and $\ZZ/2\ZZ \times \ZZ/2\ZZ$ rank 14 chambers are observed the most ($37.1\%$ and $45.4\%$).
\end{remark}

\begin{remark}
    From \Cref{tab:rank-samples-4} and \Cref{tab:rank-samples-22}, we observe that the number of chambers in the hyperplane arrangement is the same for $\mathbb{Z}/4\mathbb{Z}$ and $\ZZ/2\ZZ \times \ZZ/2\ZZ$. We speculate that this is true in general for groups of the same order. However, we also observe that the distribution of ranks for these chambers differs. Thus, the distribution of ranks can depend on the structure of $G$, and not just $|G|$.
\end{remark}

\begin{figure}
    \centering
    \includegraphics[angle=90, scale=.45]{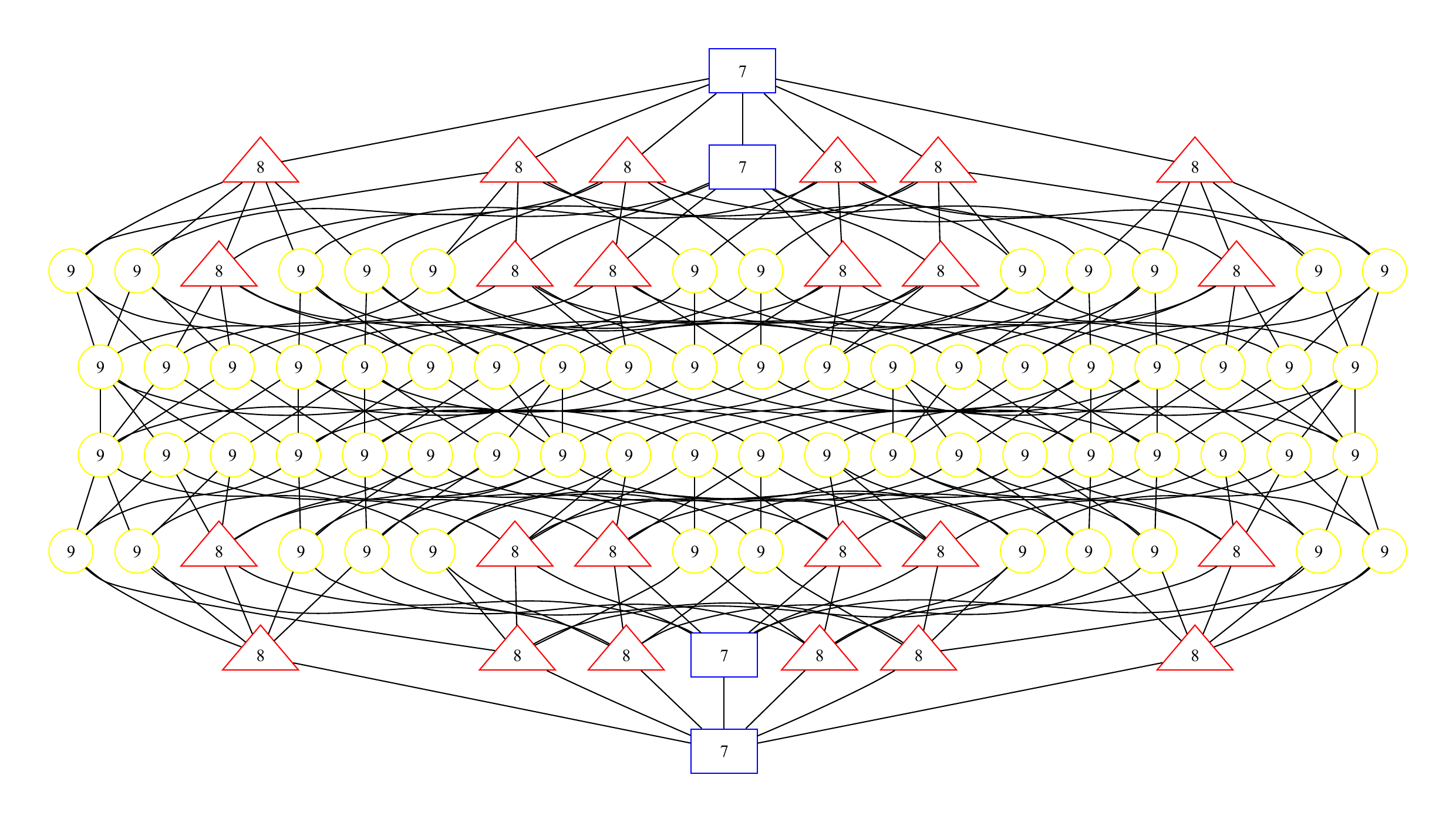}
    \caption{Poset of $\C{H}(\ZZ/3\ZZ)$, graded by distance (right to left) from the starting chamber, $\C T_1$ (at the bottom of the poset or far right of the diagram). The number in each node is the rank of the corresponding chamber. \\\textbf{Key}: blue/square - $\rank A_\lambda = 7$, red/triangle - $\rank A_\lambda = 8$, yellow/circle - $\rank A_\lambda = 9$.}
    \label{fig:Z3-poset-graph}
\end{figure}

\begin{table}[!h]
    \centering
    \begin{tabular}{c|ccc}
        rank & 7 & 8 & 9 \\
        \hline
        \# chambers & 4 & 24 & 64 \\
        \% chambers & 4.3\% & 26.1\% & 69.6\%
    \end{tabular}
    \caption{The number of samples for each $A_\lambda$ rank when $G = \ZZ/3\ZZ$.}
    \label{tab:rank-samples-3}
\end{table}

\begin{table}[!h]
    \centering
    \begin{tabular}{c|ccccccc}
        rank & 10 & 11 & 12 & 13 & 14 & 15 & 16 \\
        \hline
        \# chambers & 4 & 48 & 180 & 496 & 864 & 624 & 112 \\
        \% chambers & .2\% & 2.06\% & 7.7\% & 21.3\% & 37.1\% & 26.8\% & 4.8\%
    \end{tabular}
    \caption{The number of samples for each $A_\lambda$ rank when $G = \ZZ/4\ZZ$.}
    \label{tab:rank-samples-4}
\end{table}

\begin{table}[!h]
    \centering
    \begin{tabular}{c|ccccccc}
        rank & 10 & 11 & 12 & 13 & 14 & 15 & 16 \\
        \hline
        \# chambers & 4 & 48 & 156 & 584 & 1056 & 480 & 0 \\
        \% chambers & .2\% & 2.06\% & 6.7\% & 25.1\% & 45.4\% & 20.6\% & 0\%
    \end{tabular}
    \caption{The number of samples for each $A_\lambda$ rank when $G = \ZZ/2\ZZ \times \ZZ/2\ZZ$.}
    \label{tab:rank-samples-22}
\end{table}

\begin{table}[!h]
    \centering
    \begin{tabular}{c|ccccccccc}
        rank & 13 & 14 & 15 & 16 & 17 & 18 & 19 & 20 & 21 \\
        \hline
        \# chambers & 4 & 80 & 560 & 2,160 & 5,228 & 11,520 & 27,960 & 41,360 & 24,480 \\
        \% chambers & .004\% & .071\% & .494\% & 1.906\% & 4.612\% & 10.163\% & 24.667\% & 36.488\% & 21.596\%
    \end{tabular}
    \caption{The number of samples for each $A_\lambda$ rank when $G = \ZZ/5\ZZ$.}
    \label{tab:rank-samples-5}
\end{table}

\begin{table}[!h]
    \centering
    \begin{tabular}{c|cccccc}
        rank & 16 & 17 & 18 & 19 & 20 & 21 \\
        \hline
        \# chambers & 4 & 120 & 1296 & 7,180 & 26,576 & 79,156 \\
        \% chambers & .00004\% & .00129\% & .01391\% & .07707\% & .28528\% & .84969\%  \\ [3pt]
        rank & 22 & 23 & 24 & 25 & 26 & \\
        \hline
        \# chambers & 229,069 & 742,458 & 2,148,510 & 3,606,334 & 2,475,117 & \\
        \% chambers & 2.45892\% & 7.96986\% & 23.06303\% & 38.71193\% & 26.56897\% &
    \end{tabular}
    \caption{The number of samples for each $A_\lambda$ rank when $G = \ZZ/6\ZZ$.}
    \label{tab:rank-samples-6}
\end{table}

\subsection{Locally maximal chambers}\label{subsec:maximalChambers}
{ In our investigations, we were curious whether a greedy algorithm could be used to find an appropriate $\lambda$ by moving from chamber to chamber using a greedy algorithm.  For such a method to work, there should be no locally maximal chambers that do not achieve the maximum rank. A locally maximal chamber in terms of rank means that all adjacent chambers have rank less than or equal to the rank of the chamber.  Here, we describe our search for locally maximal chambers that do not achieve maximum rank.} To do this computationally, we cycle through every chamber in the hyperplane arrangement found by random sampling (as in \Cref{subsec:sampling}) and check whether the rank of this chamber is less than the global maximum, but greater than the rank of all adjacent chambers. The code is available in the file \texttt{locallyMaximalChambers.jl}.

The number of such chambers for small groups is given in \Cref{tab:locally-maximal}. For groups with $|G| \leq 5$, the rank of the locally maximal chambers { of deficient rank} is always one less than the dimension of the model. For $G = \mathbb{Z}/6\mathbb{Z}$, this appears to not be the case. For this group, our code did not complete, but had still found $625$ locally maximal chambers of ranks 23, 24, and 25 before it was terminated. However, since we were not able to verify that we had found all chambers in the hyperplane arrangement through random sampling, it may be the case that some of the locally maximal chambers found are not maximal at all. 

\begin{table}[!h]
    \centering
    \begin{tabular}{c|cccccc}
        Group & $\mathbb{Z}/3\mathbb{Z}$ & $\mathbb{Z}/4\mathbb{Z}$ & $\mathbb{Z}/2\mathbb{Z}\times \mathbb{Z}/2\mathbb{Z}$ & $\mathbb{Z}/5\mathbb{Z}$ & $\mathbb{Z}/6\mathbb{Z}$\\
        \hline
        Locally maximal chambers & 0 & 128& 0 & 1840 & $\geq 625$ \\
        \% & 0\% & 5.50\% & 0\% & 1.62\% & $\geq 0.006\%$  
    \end{tabular}
    \caption{The number of locally maximal chambers { of deficient rank} for each group. Locally maximal chambers { of deficient rank} are chambers which have non-maximal rank, and for which all adjacent chambers have rank lesser than or equal to the rank of the chamber.}
    \label{tab:locally-maximal}
\end{table}

\subsection{Weight vector for groups of even order}\label{subsec:weightVectorEven}

In \Cref{sec:solution}, we construct a weight vector $\lambda$ where $A_{\lambda}$ is the maximum rank possible for groups of odd order. Here, for even order groups, we construct an analogous vector $\lambda$ by following the construction in \Cref{sec:solution}, but including all elements of order $2$ in the set $X$. In this case, we find that the rank of $A_{\lambda}$ does not always achieve the empirically maximum rank. That is, random sampling sometimes finds $\lambda^\prime$ with $A_{\lambda^\prime}$ having greater rank than $A_{\lambda}$.

\begin{table}[h!]
    \centering
    \begin{tabular}{|r|c|c|c|}
        \hline
        Group & achieved rank & empirical max rank & gap \\
        \hline
        $\ZZ/4\ZZ$ & 15 & 16 & 1 \\%
        $\ZZ/6\ZZ$ & 26 & 26 & \\%
        $\ZZ/8\ZZ$ & 36 & 36 & \\%
        $\ZZ/10\ZZ$ & 46 & 46 & \\%
        $\ZZ/12\ZZ$ & 56 & 56 & \\%
        $\ZZ/14\ZZ$ & 66 & 66 & \\%
        $\ZZ/16\ZZ$ & 76 & 76 & \\%
        $\ZZ/18\ZZ$ & 86 & 86 & \\%
        $\left(\ZZ/2\ZZ\right)^2$ & 13 & 15 & 2 \\%
        $\left(\ZZ/2\ZZ\right)^3$ & 29 & 36 & 7 \\%
        $\left(\ZZ/2\ZZ\right)^4$ & 61 & 76 & 15 \\%
        $\ZZ/4\ZZ \times \left(\ZZ/2\ZZ\right)^2$ & 76 & 76 & \\%
        $\left(\ZZ/2\ZZ\right)^5$ & 125 & 156 & 31 \\%
        \hline
    \end{tabular}
    \caption{The rank of $A_{\lambda}$ for the ${\lambda}$ constructed in Section \ref{sec:solution} (achieved rank), the maximum rank of $A_\lambda$ found by randomly sampling $\lambda$ (empirical max rank), and the gap between the two ranks.}
    \label{tab:guess rank even}
\end{table}

Observe that in all cases, through random sampling we are able to find a weight vector where $A_\lambda$ has the maximum possible rank according to \Cref{lem:upper}. This means that in all cases in \Cref{tab:guess rank even}, the dimension of the variety is equal to the expected dimension of $5\ell + 1$.

We see that for  $\lambda$ constructed according to the methods in Section \ref{sec:solution}, $A_{\lambda}$ often achieves the maximum empirical rank, and it is only for powers of $\mathbb{Z}/2\mathbb{Z}$ that this construction does not work. Furthermore, in these cases, the difference between the rank of $A_{\lambda}$ and the maximum rank is equal to $|G|-1$. In \Cref{sec:example matrices}, we provide an example comparing the constructed $\lambda$ and the empirically maximal $\lambda$ for $G = \ZZ/4\ZZ$. Specifically, we find a weight vector $\lambda^\prime$ for which the corresponding matrix, $A_{\lambda^\prime}$, achieves the maximum rank of 16, and compare it to the matrix $A_\lambda$ obtained from a construction analogous to that in \Cref{sec:solution}. This gives good evidence that it is always possible to achieve the maximum possible rank (i.e., $5\ell + 1$) as the rank of $A_\lambda$, when $|G| \geq 5$.

\section{Discussion}\label{sec:discussion}

In this paper, we give a dimension formula for varieties associated to a 3-sunlet phylogenetic network and general group-based model, where the group $G$ is a finite abelian group of odd order. To do this, we use ideas from tropical geometry and linear algebra. Our proof relies on the fact that for odd order groups, there are no elements of order 2, and thus the non-identity elements can be partitioned into two sets of equal size (one of which we refer to as $X$ in \Cref{sec:solution}), each containing mutually inverse elements. Thus, our proof does not obviously generalise to even-order groups, and a full understanding of the dimension of these models remains open. In Section \ref{subsec:weightVectorEven} we construct weight vectors for even-order groups by assigning self-inverse elements to $X$ and following the construction in \Cref{sec:solution}. However, as shown in \Cref{tab:guess rank even}, for those groups that are products of $\mathbb{Z}/2\mathbb{Z}$, the rank of the corresponding $A_\lambda$ is less than the dimension of the model. Interestingly, the difference is $|G|-1$.

We have not yet explored models for which the subgroup $B\subset\Aut{G}$ is non-trivial. In \cite{gross2023dimensions}, a dimension formula is given for triangle-free phylogenetic networks for all group-based models (i.e., for all such subgroups $B$). This is achieved by choosing a weight vector that, for each edge in the network, is constant on the coordinates associated to the $B$-orbits on that edge. Since the structure of $\Aut(G)$ varies with $G$, this is only possible if we consider the two orbits $\{0\}$ and $G\setminus\{0\}$. However, in our experiments we found that for the 3-sunlet, the weight vectors $\lambda$ with $A_\lambda$ of maximal rank were not constant on these orbits, suggesting that a case-by-case analysis may be required.

The investigations in this paper highlight the intricate challenges involved in understanding 3-cycles. In the sampling experiments in Section \ref{sec:experiments}, we were able to examine the ranks of all chambers in the hyperplane arrangement. However, it is only when $|G| \geq 5$ that the expected dimension of the variety ($5\ell + 1$) is less than the dimension of the ambient space ($|G|^2$), so the cases with $|G|\leq 4$ are exceptional. We observe that as the group gets larger, there are proportionally more chambers of maximal rank. However, due to the large growth in the number of chambers as the size of the group grows, we were not able to determine if this is a general pattern.  

A first approach at obtaining the result for all finite abelian groups may be to try to adapt the weight vector $\lambda$.  Guidance on appropriate adaptations could be found by understanding how changes in the vector correspond to moving between chambers.  Indeed, a good understanding of the hyperplane arrangement may make it possible to devise an algorithm to search for a weight vector $\lambda$ for which $A_\lambda$ has maximal rank. However, as we have shown in \Cref{{subsec:maximalChambers}}, this is also not straightforward. In this section, we found that for at least some groups there are locally maximal chambers, and therefore a greedy algorithm starting at a lowest rank chamber and moving to chambers of strictly larger rank may not terminate on a globally maximal chamber.

\section{Acknowledgements}
SC was supported by the National Science Foundation Graduate Research Fellowship under Grant No. DGE-1841052, and by the National Science Foundation under Grant No. 1855135 during the writing of this paper.

SM was supported by the Biotechnology and Biological Sciences Research Council (BBSRC), part of UK Research and Innovation, through the Core Capability Grant BB/CCG1720/1 at the Earlham Institute and is grateful for HPC support from NBI’s Research Computing group. SM is grateful for further funding from BBSRC (grant number BB/X005186/1) which also supported this work.

EG was supported by the National Science Foundation grant DMS-1945584. 

This project was initiated at the ``Algebra of Phylogenetic Networks Workshop" held the University of Hawai`i at M\={a}noa and supported by National Science Foundation grant DMS-1945584. Additional parts of this research was performed while EG and SC were visiting the Institute for Mathematical and Statistical Innovation (IMSI) for the semester-long program on ``Algebraic Statistics and Our Changing World," IMSI is supported by the National Science Foundation (Grant No. DMS-1929348).
\bibliographystyle{plain}
\bibliography{bibliography.bib}
\newpage

\appendix

\section{Example Weights and Matrices}
\label{sec:example matrices}

\subsection{Example when $G = \ZZ/4\ZZ$.}

Below, we study the $\lambda$-construction from Section 3, adapted for odd even order groups (so all order two elements are in $X$), for $G = \ZZ/4\ZZ$. We denote this weight vector by $\lambda_{\text{guess}}$ and denote the corresponding matrix by $A_{\text{guess}}$. As noted in \Cref{tab:guess rank even}, $\mbox{rk} A_{\text{guess}} = 15$, which is not the empirical maximum. Through random sampling, we find in \Cref{tab:rank-samples-4} that there are 112 chambers of the $\ZZ/4\ZZ$-sunlet arrangement whose corresponding matrices achieve the maximal rank. We pick a weight vector, $\lambda_{\text{max}}$, in one of these chambers so that the corresponding matrix, $A_{\text{max}}$, has maximal rank and differs from $A_{\text{guess}}$ in exactly one column. For the $\lambda_{\text{max}}$ chosen here, the two matrices differ only in the column $[[2], [1], [1]]$.

\[A_{\text{max}} = \begin{array}{*{17}c}
    & \mathcal{T}_2 & \mathcal{T}_1 & \mathcal{T}_1 & \mathcal{T}_1 & \mathcal{T}_2 & \mathcal{T}_1 & \mathcal{T}_2 & \mathcal{T}_2 & \mathcal{T}_2 & \mathcal{T}_2 & \mathcal{T}_2 & \mathcal{T}_1 & \mathcal{T}_2 & \mathcal{T}_2 & \mathcal{T}_1 & \mathcal{T}_2 \\
	& 000 & 310 & 220 & 130 & 301 & 211 & 121 & 031 & 202 & 112 & 022 & 332 & 103 & 013 & 323 & 233 \\
	w_2^0 & 1 & 0 & 0 & 0 & 1 & 0 & 0 & 0 & 1 & 0 & 0 & 0 & 1 & 0 & 0 & 0 \\
	w_2^1 & 0 & 1 & 0 & 0 & 0 & 1 & 0 & 0 & 0 & 1 & 0 & 0 & 0 & 1 & 0 & 0 \\
	w_2^2 & 0 & 0 & 1 & 0 & 0 & 0 & 1 & 0 & 0 & 0 & 1 & 0 & 0 & 0 & 1 & 0 \\
	w_2^3 & 0 & 0 & 0 & 1 & 0 & 0 & 0 & 1 & 0 & 0 & 0 & 1 & 0 & 0 & 0 & 1 \\
	w_3^0 & 1 & 1 & 1 & 1 & 0 & 0 & 0 & 0 & 0 & 0 & 0 & 0 & 0 & 0 & 0 & 0 \\
	w_3^1 & 0 & 0 & 0 & 0 & 1 & 1 & 1 & 1 & 0 & 0 & 0 & 0 & 0 & 0 & 0 & 0 \\
	w_3^2 & 0 & 0 & 0 & 0 & 0 & 0 & 0 & 0 & 1 & 1 & 1 & 1 & 0 & 0 & 0 & 0 \\
	w_3^3 & 0 & 0 & 0 & 0 & 0 & 0 & 0 & 0 & 0 & 0 & 0 & 0 & 1 & 1 & 1 & 1 \\
	w_4^0 & 1 & 1 & 1 & 1 & 1 & 0 & 0 & 0 & 1 & 0 & 0 & 0 & 1 & 0 & 0 & 0 \\
	w_4^1 & 0 & 0 & 0 & 0 & 0 & 0 & 0 & 0 & 0 & 1 & 0 & 0 & 0 & 1 & 1 & 0 \\
	w_4^2 & 0 & 0 & 0 & 0 & 0 & 0 & 1 & 0 & 0 & 0 & 1 & 1 & 0 & 0 & 0 & 0 \\
	w_4^3 & 0 & 0 & 0 & 0 & 0 & 1 & 0 & 1 & 0 & 0 & 0 & 0 & 0 & 0 & 0 & 1 \\
	w_5^0 & 0 & 0 & 0 & 0 & 0 & 0 & 0 & 0 & 0 & 0 & 0 & 0 & 0 & 0 & 0 & 0 \\
	w_5^1 & 0 & 0 & 0 & 1 & 0 & 0 & 0 & 0 & 0 & 0 & 0 & 0 & 0 & 0 & 0 & 0 \\
	w_5^2 & 0 & 0 & 1 & 0 & 0 & 1 & 0 & 0 & 0 & 0 & 0 & 0 & 0 & 0 & 0 & 0 \\
	w_5^3 & 0 & 1 & 0 & 0 & 0 & 0 & 0 & 0 & 0 & 0 & 0 & 1 & 0 & 0 & 1 & 0 \\
	w_6^0 & 1 & 0 & 0 & 0 & 0 & 0 & 0 & 1 & 0 & 0 & 1 & 0 & 0 & 1 & 0 & 0 \\
	w_6^1 & 0 & 0 & 0 & 0 & 0 & 0 & 1 & 0 & 0 & 1 & 0 & 0 & 1 & 0 & 0 & 0 \\
	w_6^2 & 0 & 0 & 0 & 0 & 0 & 0 & 0 & 0 & 1 & 0 & 0 & 0 & 0 & 0 & 0 & 1 \\
	w_6^3 & 0 & 0 & 0 & 0 & 1 & 0 & 0 & 0 & 0 & 0 & 0 & 0 & 0 & 0 & 0 & 0 \\
\end{array}\]

\begin{align*}
    \left( \lambda_4^0, \lambda_4^1, \lambda_4^2, \lambda_4^3 \right) &= (-10, 0.773491, 0.278228, 0.313126) \\
    \left( \lambda_5^0, \lambda_5^1, \lambda_5^2, \lambda_5^3 \right) &= (0, 0, 0, 0) \\
    \left( \lambda_6^0, \lambda_6^1, \lambda_6^2, \lambda_6^3 \right) &= (-1, -5, -5, 5)
\end{align*}

\[A_{\text{guess}} = \begin{array}{*{17}c}
	& \mathcal{T}_2 & \mathcal{T}_1 & \mathcal{T}_1 & \mathcal{T}_1 & \mathcal{T}_2 & \mathcal{T}_2 & \mathcal{T}_2 & \mathcal{T}_2 & \mathcal{T}_2 & \mathcal{T}_2 & \mathcal{T}_2 & \mathcal{T}_1 & \mathcal{T}_2 & \mathcal{T}_2 & \mathcal{T}_1 & \mathcal{T}_2 \\
	& 000 & 310 & 220 & 130 & 301 & 211 & 121 & 031 & 202 & 112 & 022 & 332 & 103 & 013 & 323 & 233 \\
	w_2^0 & 1 & 0 & 0 & 0 & 1 & 0 & 0 & 0 & 1 & 0 & 0 & 0 & 1 & 0 & 0 & 0 \\
	w_2^1 & 0 & 1 & 0 & 0 & 0 & 1 & 0 & 0 & 0 & 1 & 0 & 0 & 0 & 1 & 0 & 0 \\
	w_2^2 & 0 & 0 & 1 & 0 & 0 & 0 & 1 & 0 & 0 & 0 & 1 & 0 & 0 & 0 & 1 & 0 \\
	w_2^3 & 0 & 0 & 0 & 1 & 0 & 0 & 0 & 1 & 0 & 0 & 0 & 1 & 0 & 0 & 0 & 1 \\
	w_3^0 & 1 & 1 & 1 & 1 & 0 & 0 & 0 & 0 & 0 & 0 & 0 & 0 & 0 & 0 & 0 & 0 \\
	w_3^1 & 0 & 0 & 0 & 0 & 1 & 1 & 1 & 1 & 0 & 0 & 0 & 0 & 0 & 0 & 0 & 0 \\
	w_3^2 & 0 & 0 & 0 & 0 & 0 & 0 & 0 & 0 & 1 & 1 & 1 & 1 & 0 & 0 & 0 & 0 \\
	w_3^3 & 0 & 0 & 0 & 0 & 0 & 0 & 0 & 0 & 0 & 0 & 0 & 0 & 1 & 1 & 1 & 1 \\
	w_4^0 & 1 & 1 & 1 & 1 & 1 & 0 & 0 & 0 & 1 & 0 & 0 & 0 & 1 & 0 & 0 & 0 \\
	w_4^1 & 0 & 0 & 0 & 0 & 0 & 1 & 0 & 0 & 0 & 1 & 0 & 0 & 0 & 1 & 1 & 0 \\
	w_4^2 & 0 & 0 & 0 & 0 & 0 & 0 & 1 & 0 & 0 & 0 & 1 & 1 & 0 & 0 & 0 & 0 \\
	w_4^3 & 0 & 0 & 0 & 0 & 0 & 0 & 0 & 1 & 0 & 0 & 0 & 0 & 0 & 0 & 0 & 1 \\
	w_5^0 & 0 & 0 & 0 & 0 & 0 & 0 & 0 & 0 & 0 & 0 & 0 & 0 & 0 & 0 & 0 & 0 \\
	w_5^1 & 0 & 0 & 0 & 1 & 0 & 0 & 0 & 0 & 0 & 0 & 0 & 0 & 0 & 0 & 0 & 0 \\
	w_5^2 & 0 & 0 & 1 & 0 & 0 & 0 & 0 & 0 & 0 & 0 & 0 & 0 & 0 & 0 & 0 & 0 \\
	w_5^3 & 0 & 1 & 0 & 0 & 0 & 0 & 0 & 0 & 0 & 0 & 0 & 1 & 0 & 0 & 1 & 0 \\
	w_6^0 & 1 & 0 & 0 & 0 & 0 & 0 & 0 & 1 & 0 & 0 & 1 & 0 & 0 & 1 & 0 & 0 \\
	w_6^1 & 0 & 0 & 0 & 0 & 0 & 0 & 1 & 0 & 0 & 1 & 0 & 0 & 1 & 0 & 0 & 0 \\
	w_6^2 & 0 & 0 & 0 & 0 & 0 & 1 & 0 & 0 & 1 & 0 & 0 & 0 & 0 & 0 & 0 & 1 \\
	w_6^3 & 0 & 0 & 0 & 0 & 1 & 0 & 0 & 0 & 0 & 0 & 0 & 0 & 0 & 0 & 0 & 0 \\
\end{array}\]

\end{document}